\providecommand{\tabularnewline}{\\}
\theoremstyle{definition}
\newtheorem{defn}{\protect\definitionname}
\theoremstyle{plain}
\newtheorem{lem}{\protect\lemmaname}
\theoremstyle{plain}
\newtheorem{thm}{\protect\theoremname}
\theoremstyle{plain}
\newtheorem{cor}{\protect\corollaryname}
\DeclareMathOperator*{\argmin}{arg\,min}
 \DeclareMathOperator{\supp}{supp}
\providecommand{\corollaryname}{Corollary}
\providecommand{\definitionname}{Definition}
\providecommand{\lemmaname}{Lemma}
\providecommand{\theoremname}{Theorem}
\begin{document}
\noindent \begin{center}
\textbf{\large{}Censored Quantile Regression with Many Controls}\\
\par\end{center}

\noindent \begin{center}
Seoyun Hong\footnote{Department of Economics, Boston University, seoyun@bu.edu}\\
~\\
March 5, 2023
\par\end{center}
\begin{abstract}
This paper develops estimation and inference methods for censored
quantile regression models with high-dimensional controls. The methods
are based on the application of double/debiased machine learning (DML)
framework to the censored quantile regression estimator of \citet{buchinsky1998alternative}.
I provide valid inference for low-dimensional parameters of interest
in the presence of high-dimensional nuisance parameters when implementing
machine learning estimators. The proposed estimator is shown to be
consistent and asymptotically normal. The performance of the estimator
with high-dimensional controls is illustrated with numerical simulation
and an empirical application that examines the effect of 401(k) eligibility
on savings. 

\medskip

\textbf{Keywords:} Double/debiased machine learning, Neyman orthogonality,
post-selection inference, high-dimensional data, causal inference,
censored quantile regression
\end{abstract}

\section{Introduction}

Censoring is a common problem in empirical analysis. For example,
income survey data are often right-censored due to top-coding for
high-income workers and left-censored due to the minimum wage or the
wage being naturally bounded by zero. Quantile regression is particularly
effective in analyzing censored outcomes due to the equivariance property
of the quantile function to monotonic transformations, such as censoring.
Using this insight, censored quantile regression estimators stemming
from the pioneer work of \citet{powell1986censored} are widely used
in empirical research as they require weaker distributional assumptions
on the error term compared to the traditional Tobit model. On the
other hand, there is a growing literature on the use of high-dimensional
data for causal inference in econometrics. When relying on the selection-on-observables
assumption for causal inference, researchers are motivated to include
a set of control variables that may be correlated with the main variable
of interest and the outcome. Control variables can often be high-dimensional,
because of the large data sets such as text or scanner data, or from
technical controls created through transformations of raw controls,
including interactions, powers, and b-splines. When the number of
control variables is large relative to the sample size, traditional
econometric methods might not perform well.

I develop estimation and inferential results to a censored quantile
regression model that allows for high-dimensional control variables.
While \citet{buchinsky1998alternative} is one of the most widely
used censored quantile regression estimators with fixed censoring,
it is not applicable when the dimension of control variables $p$
is either comparable to the sample size $N$ or even larger. A common
approach in this setting is to employ regularization, such as by using
the Lasso quantile regression estimator in \citet{belloni2011l1}.
However, this may not be appropriate for a causal inference setting
with observational data. While the researcher includes a large set
of control variables to mitigate omitted variable bias, Lasso selects
variables based on their correlation with the outcome, without considering
their potential confounding effect. Therefore, econometricians have
revisited the classic semiparametric inference problem of making inference
on a low-dimensional parameter in the presence of high-dimensional
nuisance parameters when adopting machine learning methods to traditional
econometric estimators.

The objective of the proposed estimator is to offer a valid inference
procedure for the effect of treatment on the conditional quantile
(i.e., quantile treatment effect) in the presence of high-dimensional
nuisance parameters, such as coefficients on many controls. First,
to estimate the nuisance parameters, I implement machine learning
methods which enable estimation with high-dimensional controls by
performing regularization. However, naively plugging in machine learning
estimates could cause a severe bias in the target parameter due to
the regularization bias and overfitting. Therefore, I employ two tools
introduced in \citet{chernozhukov2018double}: (1) Neyman-orthogonal
score function and (2) cross-fitting algorithm. The Neyman-orthogonal
score ensures that the moment condition is locally insensitive to
the estimates of nuisance parameters. This implies that the moment
condition is less sensitive to regularization mistakes made by the
machine learning estimators. The use of the cross-fitting algorithm
avoids imposing strong restrictions on the growth of entropy and model
complexity. I show that the new estimator is $\sqrt{N}$-consistent
and asyptotically normal under suitable regularity conditions.

I demonstrate the implementation of the estimator by Monte-Carlo simulations
and an empirical application to estimate the quantile treatment effect
of 401(k) eligibility on total financial assets. Simulation results
show that the suggested estimator outperforms both the estimator that
naively plugs in machine learning estimates for nuisance parameters
and the estimator that uses high-dimensional controls without model
selection. Specifically, the confidence intervals based on the developed
estimator yield coverage probabilities close to the nominal level
across different classes of data generating processes, which illustrates
the uniform property. In the empirical application, the developed
estimator produces consistent results when using sets of controls
with different dimensions. This adds credibility to previous research
that has used intuitively chosen low-dimensional control variables
to control for confounding effects. Overall, the proposed estimator
provides a powerful tool for analyzing data with censored outcomes
and high-dimensional controls.

\subsection{Literature review}

There is a large body of literature on censored quantile regression
estimators with fixed censoring. The idea of censored quantile regression
was first introduced by \citet{powell1986censored}, but its implementation
faced computational difficulties due to a non-convex optimization
problem. One approach in the literature aimed to address this issue
by developing the algorithm to implement Powell's estimator; see \citet{fitzenberger199715},
\citet{fitzenberger2007improving}, and \citet{koenker2008censored}.
The other approach proposed alternative estimation methods based on
the result that Powell's estimator is asymptotically equivalent to
running a standard quantile regression on a subsample for which the
conditional quantile is not affected by censoring. A crucial step
in implementing this method is to estimate the unknown censoring probabilities
and select the subsample. To estimate the censoring probability, researchers
have used non-parametric estimation (\citet{buchinsky1998alternative},
\citet{khan2001two}), a probit model (\citet{chernozhukov2002three}),
and a combination of both (\citet{tang2012informative}). \citet{chernozhukov2015quantile}
built upon \citet{chernozhukov2002three} and incorporated endogeneous
regressors. Lastly, \citet{chen2018sequential} leveraged the continuity
of quantile regression coefficients to select the subsample by implementing
sequential estimation over a grid of quantiles. The paper uses the
quantile regression coefficients of the upper quantile to identify
the sub-sample of the lower quantile when the outcome is left-censored.

This work builds on recent advancements in econometrics models for
high-dimensional data and post-selection inference. As noted in the
introduction, the literature addresses the inferential problem for
low-dimensional target parameters in the presence of high-dimensional
nuisance parameters. \citet{belloni2012sparse} presented a linear
instrumental variable model in which the nuisance function was the
optimal instrument estimated using Lasso. \citet{belloni2014inference}
introduced a partially linear model where the nonlinear part of the
model was treated as a nuisance function. \citet{belloni2017program}
developed an estimator for a series of treatment effects, including
average treatment effect and quantile treatment effect, in the presence
of many control variables. \citet{chernozhukov2015valid}, \citet{chernozhukov2018double},
and \citet{chernozhukov2020locally} provided a general framework
of constructing an estimator with a valid post-selection inference
when using machine learning estimators. There have also been works
on the quantile regression with many controls, such as \citet{belloni2011l1}
and \citet{belloni2019valid}, as well as instrument variable quantile
regression by \citet{chernozhukov2018instrumental} and \citet{chen2021debiased}.

Lastly, I want to discuss existing high-dimensional censored quantile
regression models with random censoring. According to \citet{koenker2008censored},
random censoring requires that censoring point is independent of the
outcome conditional on covariates. For example, in a study of the
effect of medical treatment on the time until cancer recurrence, the
outcome may be censored if a patient drops out or has not experienced
recurrence by the end of the study. The assumption of random censoring
is that the censoring point, such as dropout time, is independent
of the outcome, given covariates such as patient's health conditions.
The \citet{zheng2018high} developed a high-dimensional censored quantile
regression model using stochastic integral-based sequential estimation
and penalization, building on the works of \citet{portnoy2003censored}
and \citet{peng2008survival}. \citet{fei2021inference} employed
a splitting and fusing scheme to conduct inferences for all variables
in the model, not just the target parameter. \citet{he2022scalable}
pointed out that the computational algorithms of \citet{zheng2018high}
and \citet{fei2021inference} can be highly inefficient when applied
to high-dimensional data, and proposed a smoothed estimation equation
approach to the sequential estimation procedure. 

This paper makes two main contributions compared to the previous literature.
First, to the best of my knowledge, this is the first work to provide
post-selection inference for censored quantile regression model with
fixed censoring. Aforementioned high-dimensional censored quantile
regression estimators with random censoring can be applied to outcomes
with fixed censoring, but they require more complicated estimation
procedures that are not needed when the censoring is fixed. I argue
that the proposed estimator is more practical in economic contexts
as economists encounter fixed censoring more often. Additionally,
focusing on post-selection inference for target parameters is more
appropriate for causal inference as it accounts for confounding effects
in contrast to naive implementation of machine learning estimators.
Secondly, I derive an orthogonal score function for estimating quantile
treatment effects with censored outcomes, which has advantages even
in low-dimensional setting. In comparison to nonparametric estimation
of the selection probability, the estimator avoids the curse of dimensionality.
In comparison to parametric estimation of the selection probability,
the orthogonal score function allows for model misspecification of
the selection probability. 

\subsection{Plan of the paper}

The rest of the paper is structured as follows. Section 2 reviews
the censored quantile regression estimator of \citet{buchinsky1998alternative}
and highlights its limitations when using machine learning estimators
for estimating nuisance parameters. Section 3 introduces the DMLCQR
estimator, its estimation algorithm, and its asymptotic properties.
Section 4 presents the results of a Monte Carlo simulation, and Section
5 presents the results of an empirical application. Section 6 concludes
the paper, with the proofs of the main results provided in the Appendix.

\textit{Notation. }I work with triangular array data $\left\{ \omega_{i,n};i=1,\dots,n;n=1,2,3,\dots\right\} $
where for each $n$, $\left\{ \omega_{i,n};i=1,\dots,n\right\} $
is defined on the probability space $\left(\Omega,\mathcal{S},P_{n}\right)$.
Each $\omega_{i,n}=(y_{i,n}^{\prime},d_{i,n}^{\prime},z_{i,n}^{\prime})^{\prime}$
is a vector which are independent across $i$ but not necessarily
identically distributed (i.n.i.d.). Therefore, while all parameters
that characterize the distribution of $\left\{ \omega_{i,n};i=1,\dots,n\right\} $
are implicitly indexed by $P_{n}$ and $n$, I omit this dependence
from the notation to maintain simplicity. I use the following notation:
$E_{n}[f]=n^{-1}\sum_{i=1}^{n}f(\omega_{i})$ and $\bar{E}_{n}[f]=E[E_{n}[f]]=\dfrac{1}{n}\sum_{i=1}^{n}E\left[f(\omega_{i})\right]$.
The $l_{2}$-norm is denoted by $\left\Vert \cdot\right\Vert $; the
$l_{0}$-norm $\left\Vert \cdot\right\Vert _{0}$ denotes the number
of nonzero components of a vector; and the $l_{\infty}$-norm $\left\Vert \cdot\right\Vert _{\infty}$
denotes the maximal absolute value in the components of a vector. 

\section{Censored quantile regression estimator}

\subsection{Censored quantile regression}

For a quantile index $\tau\in(0,1)$, consider a partially linear
censored quantile regression model
\begin{align}
\text{y}_{i}^{*} & =d_{i}\theta_{\tau}+g_{\tau}(z_{i})+\epsilon_{i},\;Q_{\tau}(\epsilon_{i}|d_{i},z_{i})=0
\end{align}
where $y_{i}^{*}$ is the latent outcome variable, $d_{i}$ is the
target variable of interest (e.g. a policy or treatment variable),
and the variable $z_{i}$ represents the counfounding factors which
affect the equation through an unknown function $g_{\tau}$. The observed
outcome $y_{i}$ is censored from below such that $y_{i}=\max(y_{i}^{*},c_{i})$.
Note that the censoring point can be different for each observation
and it could be unknown, as the estimator does not require the knowledge
of the censoring point. The $p$-dimensional control variables $x_{i}=X(z_{i})$
are used to approximate the function $g_{\tau}(z_{i})$ which takes
the form
\[
g_{\tau}(z_{i})=x_{i}^{\prime}\beta_{\tau}+r_{\tau i}
\]
where $r_{\tau i}$ is an approximation error. The main parameter
of interest $\theta_{\tau}$ is the quantile treatment effect on the
latent outcome, while $\beta_{\tau}$ and $r_{\tau}$ are nuisance
parameters. 

I introduce the censored quantile regression estimator of \citet{buchinsky1998alternative}.
Let $w_{i}=(d_{i},z_{i})$,$t_{i}=I(y_{i}^{*}>c_{i})$ and $\pi(w_{i})=P[t_{i}=1|w_{i}]$.
In (1), $d_{i}\theta_{\tau}+g_{\tau}(z_{i})$ is the $\tau$th conditional
quantile of $y_{i}^{*}$ given $w_{i}$. Therefore, the conditional
probability that $y_{i}^{*}<d_{i}\theta_{\tau}+g_{\tau}(z_{i})$ given
$w_{i}$, $t_{i}=1$, and $\pi(w_{i})>1-\tau$ is
\[
P(y_{i}^{*}<d_{i}\theta_{\tau}+x_{i}^{\prime}\beta_{\tau}+r_{\tau i}|w_{i},t_{i}=1,\pi(w_{i})>1-\tau)=\dfrac{\pi(w_{i})-(1-\tau)}{\pi(w_{i})}\equiv h_{\tau}(w_{i}).
\]
Then, the population parameter $(\theta_{\tau},\beta_{\tau})$ is
given by
\begin{equation}
(\theta_{\tau},\beta_{\tau})=\argmin_{\theta,\beta}E\left[t_{i}I(h_{\tau i}>0)\rho_{h_{\tau i}}(y_{i}-d_{i}\theta-x_{i}^{\prime}\beta-r_{\tau i})\right]=\argmin_{\theta,\beta}E\left[g(w_{i},\theta,\beta,\pi)\right]
\end{equation}
where $\rho_{\tau}(u)=(\tau-I(u\leq0))u$. Since $\pi(w_{i})$ and
accordingly $h_{\tau}(w_{i})$ are unknown, \citet{buchinsky1998alternative}
use nonparametric estimates $\hat{\pi}(w_{i})$ and $\hat{h}_{\tau}(w_{i})$
from the first step, so the sample estimator is
\begin{equation}
(\hat{\theta}_{\tau},\hat{\beta}_{\tau})=\argmin_{(\theta,\beta)}\dfrac{1}{N}\sum_{i=1}^{N}t_{i}I(\hat{h}_{\tau i}>0)\rho_{\hat{h}_{\tau i}}(y_{i}-d_{i}\theta-x_{i}^{\prime}\beta).
\end{equation}
Note that the estimator in (3) employs a weighted and rotated quantile
regression.

\subsection{High-dimensional setting}

Consider a setting with high-dimensional $x_{i}$, where the dimension
$p$ could be either comparable to the sample size $N$ or larger
than the sample size $(p\gg N)$. The goal is to make inference for
the parameter of interest $\theta_{\tau}$ in the presence of high-dimensional
nuisance parameters $(\pi,\beta_{\tau})$. Conventional methods employed
in the estimator, such as nonparametric estimation and quantile regression,
are not applicable in this setting ($p\gg N$), so machine learning
methods with regularization must be employed. However, when the machine
learning estimates $(\hat{\pi},\hat{\beta}_{\tau})$ are used, the
estimator $\hat{\theta}_{\tau}$ is not necessarily $\sqrt{N}$-consistent.
This is because while using the estimator $(\hat{\pi},\hat{\beta}_{\tau})$
to estimate $\theta_{\tau}$ contributes with a bias of the order
$\left(\left\Vert \hat{\pi}-\pi\right\Vert ,\left\Vert \hat{\beta}_{\tau}-\beta_{\tau}\right\Vert \right)$
in principle, machine learning estimators usually converge slower
than $\sqrt{N}$. Specifically, the score function of (2) for $\theta_{\tau}$
has non-zero pathwise (Gateaux) derivatives with respect to the nuisance
parameter $(\pi,\beta_{\tau})$:
\begin{align*}
\partial_{\pi}E\left[s(w_{i},\theta,\beta,\pi)\right][\pi-\pi_{0}] & =E\left[t_{i}I(h_{\tau i}>0)\dfrac{1-\tau}{\pi^{2}}d_{i}(\pi_{i}-\pi_{i0})\right]\neq0\\
\partial_{\beta}E\left[s(w_{i},\theta,\beta,\pi)\right][\beta-\beta_{0}] & =E\left[t_{i}I(h_{\tau i}>0)f_{i}d_{i}x_{i}^{\prime}(\beta-\beta_{0})\right]\neq0
\end{align*}
where the pathwise derivative is defined in Section 3 and $s(w_{i},\theta,\beta,\pi)=\dfrac{\partial g(w_{i},\theta,\beta,\pi)}{\partial\theta}=t_{i}I(h_{\tau i}>0)(h_{\tau i}-I(y_{i}-d_{i}\theta-x_{i}^{\prime}\beta\leq0))d_{i}$.
This implies that the first-order bias of nuisance parameter estimates
would affect the target parameter, which are regularization and overfitting
bias from using machine learning estimators. Therefore, additional
measures are necessary for valid inference. 

\section{The DMLCQR estimator}

\subsection{The Neyman-orthogonal score}

I refer to the proposed estimator as DMLCQR estimator. The Neyman
orthogonality condition is an important concept in understanding the
estimator, so I introduce the definition in my context following the
definition in \citet{chernozhukov2018double}. Let $(\theta_{0},\eta_{0})$
be the true value of the finite dimensional parameter of interest
$\theta\in\Theta\subset\mathbb{R}^{d_{\theta}}$ and the infinite-dimensional
nuisance parameter $\eta\in\mathcal{T}$, where $\mathcal{T}$ is
a convex subset of some normed vector space. I assume that the moment
conditions $E[\psi(W_{i},\theta_{0},\eta_{0})]=0$ hold. The pathwise
(Gateaux) derivative map $D_{r}:\tilde{\mathcal{T}}\to\mathbb{R}^{d_{\theta}}$
for $\tilde{\mathcal{T}}=\{\eta-\eta_{0}|\eta\in\mathcal{T}\}$ is
defined as

\begin{equation}
D_{r}[\eta-\eta_{0}]\equiv\partial_{r}\{E_{p}[\psi(W,\theta_{0},\eta_{0}+r(\eta-\eta_{0}))]\},\;\eta\in\mathcal{T}
\end{equation}
for all $r\in[0,1)$. For convenience, denote
\begin{equation}
\partial_{\eta}E_{p}[\psi(W,\theta_{0},\eta_{0})][\eta-\eta_{0}]\equiv D_{r}[\eta-\eta_{0}],\;\eta\in\mathcal{T}
\end{equation}
which is the pathwise derivative (4) at $r=0$. Additionally, let
$\mathcal{T}_{N}\subset\mathcal{T}$ be a nuisance realization set
such that estimators of $\eta_{0}$ take values in this set with high
probability. The Neyman orthogonality condition requires that the
derivative in (5) vanishes for all $\eta\in\mathcal{T}_{N}$.
\begin{defn}
\textit{The score function $\psi(W_{i},\theta,\eta)$ obeys the Neyman
orthogonality condition at $(\theta_{0},\eta_{0})$ with respect to
the nuisance parameter realization set $\mathcal{T_{N}\subset\mathcal{T}}$
if $E_{p}[\psi(W_{i},\theta_{0},\eta_{0})]=0$ and the pathwise derivative
map $D_{r}[\eta-\eta_{0}]$ exists for all $r\in[0,1)$ and $\eta\in\mathcal{T_{N}}$,
and vanishes at $r=0$, that is
\[
\partial_{\eta}E_{p}[\psi(W,\theta_{0},\eta_{0})][\eta-\eta_{0}]=0\;\forall\eta\in\mathcal{T}_{N}.
\]
}

I construct a score function that satisfies the Neyman orthogonality
condition in Definition 1. Let $f_{i}=f_{\epsilon_{i}}(0|w_{i})$
denote the conditional density at 0 of the error term $\epsilon_{i}$
in (1). The construction of the orthogonal score function is based
on the linear projection of $d_{i}$ on $x_{i}$, both weighted by
$\sqrt{f_{i}}$ on the subsample that satisfies $t_{i}=1$ and $h_{\tau i}>0$
\begin{equation}
\sqrt{f_{i}}d_{i}=\sqrt{f_{i}}x_{i}^{\prime}\mu_{\tau}+u_{i},\;\bar{E}[t_{i}I(h_{\tau i}>0)\sqrt{f_{i}}x_{i}u_{i}]=0
\end{equation}
where $\mu_{\tau}\in\argmin_{\mu}E\left[t_{i}I(h_{\tau i}>0)f_{i}(d_{i}-x_{i}^{\prime}\mu)^{2}\right]$.
The orthogonal score function i{\small{}s
\begin{align}
\psi(w_{i},\theta,\eta) & =t_{i}I\left(h_{\tau i}>0\right)\left(h_{\tau i}-I(y_{i}-d_{i}\theta_{\tau}-x_{i}'\beta_{\tau}-r_{\tau i}\leq0)\right)(d_{i}-x_{i}^{\prime}\mu_{\tau})+(t_{i}-\pi_{i})I\left(h_{\tau i}>0\right)\dfrac{(1-\tau)}{\pi_{i}}(d_{i}-x_{i}^{\prime}\mu_{\tau})\\
 & =I\left(h_{\tau i}>0\right)\left(t_{i}\left\{ h_{\tau i}-I(y_{i}-d_{i}\theta_{\tau}-x_{i}'\beta_{\tau}-r_{\tau i}\leq0)\right\} +(t_{i}-\pi_{i})\dfrac{(1-\tau)}{\pi_{i}}\right)(d_{i}-x_{i}^{\prime}\mu_{\tau})\nonumber 
\end{align}
w}here $\eta_{\tau}=(\pi,\beta_{\tau},\mu_{\tau})$ is the nuisance
parameter. This leads to a moment condition $E[\psi(w_{i},\theta,\eta_{\tau0})]=0$
to estimate $\theta_{\tau}$ and satisfies the orthogonality condition
defined above:
\begin{align*}
\partial_{\pi}E\left[\psi(w_{i},\theta,\eta)\right][\pi-\pi_{0}] & =0\\
\partial_{\beta_{\tau}}E\left[\psi(w_{i},\theta,\eta)\right][\beta_{\tau}-\beta_{\tau0}] & =0\\
\partial_{\mu_{\tau}}E\left[\psi(w_{i},\theta,\eta)\right][\mu_{\tau}-\mu_{\tau0}] & =0.
\end{align*}
\end{defn}
\begin{lem}
The new score function in (6) obeys the Neyman orthogonality condition.
\end{lem}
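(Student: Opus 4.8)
The plan is to verify the two requirements of Definition 1 in turn: first that $E[\psi(w_i,\theta_0,\eta_{\tau 0})]=0$, and then that the three partial Gateaux derivatives, with respect to $\pi$, $\beta_\tau$, and $\mu_\tau$, each vanish at $r=0$. The workhorse observation throughout is that, conditional on $w_i$, each of the two summands of $\psi$ has mean zero at the truth. Indeed, $E[t_i\mid w_i]=\pi_i$ makes the correction term $(t_i-\pi_i)I(h_{\tau i}>0)\tfrac{1-\tau}{\pi_i}(d_i-x_i'\mu_\tau)$ conditionally mean zero; and because $y_i=y_i^*$ on $\{t_i=1\}$ and $h_{\tau i}=P(\epsilon_i\le 0\mid w_i,t_i=1,h_{\tau i}>0)$ by construction, the leading factor $t_i\bigl(h_{\tau i}-I(\epsilon_i\le 0)\bigr)$ is conditionally mean zero as well. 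This delivers $E[\psi]=0$ immediately. It also disposes of the $\mu_\tau$-derivative at once: since $\mu_\tau$ enters $\psi$ only through the $w_i$-measurable factor $d_i-x_i'\mu_\tau$, differentiating and conditioning on $w_i$ leaves the conditionally-mean-zero bracket multiplying a $w_i$-measurable term, so $\partial_{\mu_\tau}E[\psi][\mu_\tau-\mu_{\tau 0}]=0$.

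For the $\pi$-derivative I would parametrize $\pi_{r}=\pi_0+r(\pi-\pi_0)$ and note that $\pi$ enters $\psi$ in three smooth places, through $h_{\tau i}=1-\tfrac{1-\tau}{\pi_i}$ in the first summand and through the explicit $\pi_i$ and $\pi_i^{-1}$ in the correction summand, as well as through the selection indicator $I(h_{\tau i}>0)=I(\pi_i>1-\tau)$. Differentiating the smooth parts, the contribution coming from $\partial_r h_{\tau i}=\tfrac{1-\tau}{\pi_i^2}(\pi_i-\pi_{i0})$ in the first summand is exactly offset, term by term, by the contribution from $\partial_r[(t_i-\pi_i)\tfrac{1-\tau}{\pi_i}]=-t_i\tfrac{1-\tau}{\pi_i^2}(\pi_i-\pi_{i0})$ in the correction summand; this pointwise cancellation is precisely the purpose of the added propensity-correction term. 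The only delicate piece is the derivative of the indicator $I(\pi_{r}>1-\tau)$, which produces a surface term supported on $\{\pi_i=1-\tau\}$ weighted by the conditional expectation given $w_i$ of the remaining integrand. That conditional expectation is zero everywhere by the same mean-zero property used above, so the boundary term drops and $\partial_\pi E[\psi][\pi-\pi_0]=0$.

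The $\beta_\tau$-derivative is the crux and the step I expect to require the most care. Here $\beta_\tau$ enters only through the indicator $I(y_i-d_i\theta_0-x_i'\beta_{\tau r}-r_{\tau i}\le 0)$, which on $\{t_i=1\}$ equals $I(\epsilon_i\le x_i'(\beta_{\tau r}-\beta_{\tau 0}))$. The key manoeuvre is to take the conditional expectation given $w_i$ before differentiating: this turns the non-smooth indicator into the conditional distribution function of $\epsilon_i$ evaluated at $x_i'(\beta_{\tau r}-\beta_{\tau 0})$, whose derivative at $r=0$ is the density $f_i=f_{\epsilon_i}(0\mid w_i)$ times $x_i'(\beta_\tau-\beta_{\tau 0})$. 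Consequently $\partial_{\beta_\tau}E[\psi][\beta_\tau-\beta_{\tau 0}]$ reduces to a density-weighted inner product of the regressors $x_i$ with the projection residual $d_i-x_i'\mu_\tau$ over the selected subsample $\{t_i=1,\,h_{\tau i}>0\}$, which is exactly the normal equation (6) defining $\mu_\tau$ as the $f_i$-weighted least-squares projection of $d_i$ on $x_i$. Hence this derivative vanishes, and it is precisely this construction of $\mu_\tau$ that orthogonalizes the score against perturbations of $\beta_\tau$. The main obstacle is rigor in differentiating through the two non-smooth indicators: justifying the interchange of derivative and expectation (via dominated convergence together with continuity and boundedness of the conditional density of $\epsilon_i$ near zero), matching the density produced by this smoothing to the weight $f_i$ entering the definition (6) of $\mu_\tau$, and confirming that the boundary surface term from $I(h_{\tau i}>0)$ indeed integrates to zero.
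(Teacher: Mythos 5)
Your proposal is correct and follows essentially the same route as the paper's own proof: conditional mean-zero of both summands of $\psi$ delivers the moment condition and kills the $\mu_\tau$-derivative, exact pointwise cancellation between the $\partial_r h_{\tau i}=\tfrac{1-\tau}{\pi_i^2}(\pi_i-\pi_{i0})$ contribution and the derivative of the propensity-correction term handles the $\pi$-derivative, and taking the conditional expectation given $w_i$ to smooth the indicator into the conditional CDF reduces the $\beta_\tau$-derivative to the $f_i$-weighted normal equation in (6) defining $\mu_\tau$. The only differences are minor refinements on your side: you obtain the moment condition from the identity $h_{\tau i}=P(\epsilon_i\le 0\mid w_i,t_i=1)$ on $\{h_{\tau i}>0\}$ rather than citing the population first-order conditions (11)--(12) as the paper does, and you explicitly argue that the surface term from differentiating $I(\pi_i>1-\tau)$ vanishes because the remaining integrand has zero conditional mean, a point the paper's computation leaves implicit.
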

The proof of the lemma can be found in the appendix. 

\subsection{Estimation algorithm}

Define the Lasso estimator as
\[
\hat{\theta}\in\argmin_{\theta}E_{n}\left[M\left(y_{i},w_{i},\theta\right)\right]+\dfrac{\lambda}{n}\left\Vert \Gamma\theta\right\Vert _{1}
\]
where $\lambda$ is a penalty level and $\Gamma$ is a diagonal matrix
of penalty loadings. The Post-Lasso estimator is then defined as 

\[
\tilde{\theta}\in\argmin_{\theta}E_{n}\left[M\left(y_{i},w_{i},\theta\right)\right]:\supp(\theta)\subseteq\tilde{T}
\]
where the set $\tilde{T}$ contains $\supp(\hat{\theta})$ and may
also include additional variables considered as important. I will
set $\tilde{T}=\supp(\hat{\theta})$ unless otherwise noted. 

I describe three Lasso regressions to estimate nuisance parameters
$(\hat{\pi},\hat{\beta}_{\tau},\hat{\mu}_{\tau})$. The first is Logit
Lasso regression 
\[
\hat{\alpha}=\argmin_{\alpha}E_{n}\left[-t_{i}\ln\Lambda(w_{i}^{\prime}\alpha)-(1-t_{i})\ln[1-\Lambda(w_{i}^{\prime}\alpha)]\right]+\dfrac{\lambda_{1}}{n}\left\Vert \Gamma_{1}\alpha\right\Vert _{1}
\]
where $\Lambda(u)=\dfrac{\exp(u)}{1+\exp(u)}$. Construct the estimator
of $\pi$ and $h_{\tau}$ by $\hat{\pi}(w_{i})=\Lambda(w_{i}^{\prime}\hat{\alpha})$
and $\hat{h}_{\tau i}=\dfrac{\hat{\pi}_{i}-(1-\tau)}{\hat{\pi}_{i}}$.
The second is Lasso (weighted and rotated) quantile regression
\begin{align}
\left(\hat{\theta}_{\tau},\hat{\beta}_{\tau}\right) & =\argmin_{\theta,\beta}E_{n}\left[t_{i}I\left(\hat{h}_{\tau i}>0\right)\rho_{\hat{h}_{\tau i}}\left(y_{i}-d_{i}\theta-x_{i}^{\prime}\beta\right)\right]+\dfrac{\lambda_{2}}{n}\left\Vert \Gamma{}_{2}(\theta,\beta)\right\Vert _{1}
\end{align}
and the last is Lasso regression with estimated weights
\[
\hat{\mu}_{\tau}=\argmin_{\mu}E_{n}\left[t_{i}I(\hat{h}_{\tau i}>0)\hat{f}_{i}(d_{i}-\mu x_{i})^{2}\right]+\dfrac{\lambda_{3}}{n}\left\Vert \Gamma_{3}\mu\right\Vert _{1}.
\]
The Post-Lasso estimates $(\tilde{\pi},\tilde{\beta}_{\tau},\tilde{\mu}_{\tau})$
could be equivalently used. In the comments below, I discuss the recommended
choices for for $\lambda=(\lambda_{1},\lambda_{2},\lambda_{3})$ and
$\Gamma=(\Gamma_{1},\Gamma_{2},\Gamma_{3})$ and the estimation of
the conditional density function $f_{i}$. I combine the new score
(7) with the cross-fitting algorithm of \citet{chernozhukov2018double}
to propose DMLCQR estimator. The estimator of interest $\tilde{\theta}$
can be constructed either using step 3-1 or 3-2.

\medskip

\hspace{-5mm}\textbf{Algorithm}
\begin{enumerate}
\item Take a $K$-fold random partition $(I_{k})_{k=1}^{K}$ of observation
indices $[N]=\{1,\dots,N\}$. For simplicity, assume that the size
of each fold $I_{k}$ is same with $n=N/K$. For each $k\in[K]=\{1,\dots,K\}$,
define the auxiliary sample $I_{k}^{c}=[N]\setminus I_{k}$.
\item For each $k\in[K]$, construct an ML estimator $\hat{\eta}_{\tau k}=\hat{\eta}_{\tau}((W_{i})_{i\in I_{k}^{c}})$
of $\eta_{\tau0}$ using the auxiliary sample.
\begin{enumerate}
\item Compute $\tilde{\pi}_{k}$ from Logit Post-Lasso regression of $t$
on $d$ and $x$ and calculate $\tilde{h}_{\tau ki}=\dfrac{\tilde{\pi}_{ki}-(1-\tau)}{\tilde{\pi}_{ki}}$. 
\item Compute $(\hat{\theta}_{\tau k},\hat{\beta}_{\tau k})$ from Lasso
(weighted and rotated) quantile regression of $y$ on $d$ and $x$.
Compute the Post-Lasso estimates $(\tilde{\theta}_{\tau k},\tilde{\beta}_{\tau k})$. 
\item Estimate the conditional density $\hat{f}_{k}$.
\item Compute $\tilde{\mu}_{\tau k}$ from the Post-Lasso estimator of $\sqrt{\hat{f}_{k}}d$
on $\sqrt{\hat{f}_{k}}x$ using the subsample where $t_{i}=1$ and
$I(\tilde{h}_{\tau ki}>0)$. Then $\hat{\eta}_{\tau k}=\left(\tilde{\pi}_{k},\tilde{\beta}_{\tau k},\tilde{\mu}_{\tau k}\right)$.
\end{enumerate}
\item[3-1.] Construct the estimator $\check{\theta}_{\tau k}$ as
\[
\check{\theta}_{\tau k}\in\argmin_{\theta}L_{n,k}(\theta)=\dfrac{\left\{ E_{n,k}\left[\psi_{i}(\theta,\hat{\eta}_{\tau k})\right]\right\} ^{2}}{E_{n,k}\left[\psi_{i}(\theta,\hat{\eta}_{\tau k})^{2}\right]}
\]
where $\hat{\psi}_{i}(\theta,\hat{\eta}_{\tau k})=I\left(\tilde{h}_{\tau ki}>0\right)\left(t_{i}\left\{ \tilde{h}_{\tau ki}-I(y_{i}-d_{i}\theta-x_{i}'\tilde{\beta}_{\tau k}\leq0)\right\} +(t_{i}-\tilde{\pi}_{ki})\dfrac{(1-\tau)}{\tilde{\pi}_{ki}}\right)(d_{i}-x_{i}^{\prime}\tilde{\mu}_{\tau k})$
and $E_{n,k}$ is the empirical expectation over the $k$th fold of
the data, $E_{n,k}[\psi(w)]=n^{-1}\sum_{i\in I_{k}}\psi(w_{i})$.
Aggregate the estimators:
\[
\tilde{\theta}_{0}=\dfrac{1}{K}\sum_{k=1}^{K}\check{\theta}_{0,k}.
\]
\item[3-2.] Construct the estimator $\tilde{\theta}_{0}$ as 
\[
\tilde{\theta}_{0}\in\argmin_{\theta}\dfrac{1}{K}\sum_{k=1}^{K}L_{n,k}(\theta).
\]
\end{enumerate}
In Step 2-(b), the Post-Lasso estimator is used as there is a possibility
that the Lasso estimator will not select the variable of interest
$d$ as a relevant control variable. For consistency, the suggested
algorithm uses Post-Lasso estimator for estimating other nuisance
parameters $(\pi_{k},\mu_{k})$, but they can be estimated using the
Lasso estimator. The estimator using Step 3-1 is referred to as DML1
and Step 3-2 as DML2. The difference between the two is the aggregation
method among the estimates in each subsample. Although the theoretical
properties of both methods are the same, \citet{chernozhukov2018double}
note that DML2 might perform better because the pooled objective function
in Step 3-2 is more stable than the separate objective function in
Step 3-1. For the simulation and empirical application results beyond,
I use DML2 estimator.

\subsubsection*{Comment 1. Penalty parameters for Lasso logit and least squares}

I follow \citet{belloni2017program} for setting the penalty level
and estimating the penalty loadings, i.e.,
\begin{align*}
\lambda_{1} & =c\sqrt{n}\Phi^{-1}\left(1-\dfrac{\gamma}{2(p+1)n}\right),\;\lambda_{3}=c\sqrt{n}2\Phi^{-1}\left(1-\dfrac{\gamma}{2p}\right)
\end{align*}
and penalty loading matrices $(\hat{\Gamma}_{1},\hat{\Gamma}_{3})$
can be estimated using the iterative algorithm in the paper.

\subsubsection*{Comment 2. Penalty parameters for Lasso quantile regression}

I adapt the penalty parameters in \citet{belloni2011l1} for weighted
and rotated quatile regression. The penalty parameter used is
\[
\dfrac{\lambda_{2}}{n}=c\times\left\{ (1-\alpha)\text{-quantile of \ensuremath{\left\Vert \Gamma^{-1}E_{n}\left[t_{i}I\left(\hat{h}_{\tau i}>0\right)\left(\hat{h}_{\tau i}-I(U_{i}\leq\hat{h}_{\tau i})\right)x_{i}\right]\right\Vert _{\infty}}}\right\} 
\]
where $U_{1},\cdots,U_{n}$ are i.i.d uniform $(0,1)$ random variables
and $\Gamma_{2}$ is a diagonal matrix with $\Gamma_{2,11}^{2}=E_{n}[d_{i}^{2}]$
and $\Gamma_{2,jj}^{2}=E_{n}[x_{ij}^{2}].$

\subsubsection*{Comment 3. Estimation of conditional density function}

In Step 2-(c), an estimate of the conditional density function $f_{i}$
is required. I follow the estimation method in \citet{belloni2019valid},
in which they use the observation that $\dfrac{1}{f_{i}}=\dfrac{\partial Q(\tau|d_{i},z_{i})}{\partial\tau}$,
where $Q(\cdot|d_{i},z_{i})$ denotes the conditional quantile function
of the latent outcome $y_{i}^{*}$. Let $\hat{Q}(\tau|d_{i},z_{i})$
denote an estimate of the conditional $\tau$-quantile function $Q(\tau|d_{i},z_{i})$
based on Lasso or Post-Lasso estimator of (8) and let $h=h_{n}\to0$
denote a bandwidth parameter. Then an estimator of $f_{i}$ can be
constructed as
\[
\hat{f}_{i}=\dfrac{2h}{\hat{Q}(\tau+h|z_{i},d_{i})-\hat{Q}(\tau-h|z_{i},d_{i})}.
\]

\subsection{Asymptotic properties}

\subsubsection{Regularity conditions}

I present sufficient regularity conditions for the validity of the
main estimation and inference results. The asymptotic results in the
paper builds upon \citet{belloni2019valid}, so the regularity conditions
provided include those in the paper. Specifically, Condiion AS, M,
D are as in \citet{belloni2019valid} and Condition N presents new
regularity conditions not in the paper. Let $c$, $C$, and $q$ be
fixed constants with $c>0$, $C\geq1$, and $q\geq4$, and let $l_{n}\uparrow\infty$,
$\delta_{n}\downarrow0$, and $\Delta_{n}\downarrow0$ be sequences
of positive constants. I assume that the following condition holds
for the data-generating process $P=P_{n}$ for each $n$.

\textit{Condition AS} (1) Let $\{(y_{i},d_{i},x_{i}=X(z_{i}))\}$
be independent random variables satisfying (1) and (6) with $\left\Vert \mu_{0\tau}\right\Vert +\left\Vert \beta_{\tau}\right\Vert +\left|\theta_{\tau}\right|\leq C$.
(2) There exists $s\geq1$ and vectors $\beta_{\tau}$ and $\theta_{\tau}$
such that $x_{i}^{\prime}\theta_{0\tau}=x_{i}^{\prime}\theta_{\tau}+r_{\theta\tau i}$,
$\left\Vert \theta_{\tau}\right\Vert _{0}\leq s$, $\bar{E}[r_{\theta\tau i}^{2}]\leq Cs/n$,
$\left\Vert \theta_{0\tau}-\theta_{\tau}\right\Vert _{1}\leq s\sqrt{\log(pn)/n}$,
and $\left\Vert \beta_{\tau}\right\Vert _{0}\leq s$, $\bar{E}[r_{\tau i}^{2}]\leq Cs/n$.
(3) The conditional distribution function of $\epsilon_{i}$ is absolutely
continuous with continuously differentiable density $f_{\epsilon_{i}|d_{i},z_{i}}(\cdot|d_{i},z_{i})$
such that $0<\underline{f}\leq f_{i}\leq\sup_{t}f_{\epsilon_{i}|d_{i},z_{i}}(t|d_{i},z_{i})\leq\bar{f}\leq C$
and $\sup_{t}f_{\epsilon_{i}|d_{i},z_{i}}^{\prime}(t|d_{i},z_{i})\leq\bar{f^{\prime}}\leq C$. 

\textit{Condition M} (1) We have $\bar{E}[\{(d_{i},x_{i}^{\prime})\xi\}^{2}]\geq c\left\Vert \xi\right\Vert ^{2}$
and $\bar{E}[\{(d_{i},x_{i}^{\prime})\xi\}^{4}]\leq C\left\Vert \xi\right\Vert ^{4}$
for all $\xi\in\mathbb{R}^{p+1}$, 

\hspace{-5mm}$c\leq\min_{1\leq j\leq p}\bar{E}\left[\left|f_{i}x_{ij}v_{i}-E\left[f_{i}x_{ij}v_{i}\right]\right|^{2}\right]^{1/2}\leq\max_{1\leq j\leq p}\bar{E}\left[\left|f_{i}x_{ij}v_{i}\right|^{3}\right]^{1/3}\leq C$.
(2) The approximation error satisfies $\left|\bar{E}\left[f_{i}v_{i}r_{\tau i}\right]\right|\leq\delta_{n}n^{-1/2}$
and $\bar{E}\left[\left(x_{i}^{\prime}\xi\right)^{2}r_{\tau i}^{2}\right]\leq C\left\Vert \xi\right\Vert ^{2}\bar{E}\left[r_{\tau i}^{2}\right]$
for all $\xi\in\mathbb{R}^{p}$. (3) Suppose that $K_{q}=E\left[\max_{1\leq i\leq n}\left\Vert (d_{i},v_{i},x_{i}^{\prime})^{\prime}\right\Vert _{\infty}^{q}\right]^{1/q}$
is finite and satisfies $(K_{q}^{2}s^{2}+s^{3})\log^{3}(pn)\leq n\delta_{n}$
and $K_{q}^{4}s\log(pn)\log^{3}n\leq\delta_{n}n$.

\textit{Condition D}\textbf{ }(1) For $u\in\mathcal{U}$, assume that
$Q_{u}(y_{i}^{*}|z_{i},d_{i})=d_{i}\alpha_{u}+x_{i}^{\prime}\beta_{u}+r_{ui}$,
$f_{ui}=f_{y_{i}|d_{i},z_{i}}(d_{i}\alpha_{u}+x_{i}^{\prime}\beta_{u}+r_{ui}|z_{i},d_{i})\geq c$
where$\bar{E}[r_{ui}^{2}]\leq\delta_{n}n^{-1/2}$ and $\left|r_{ui}\leq\delta_{n}h\right|$
for all $i$ and the vector $\beta_{u}$ satisfies $\left\Vert \beta_{u}\right\Vert _{0}\leq s$.
(2) For $\tilde{s}_{\theta\tau}=s+\frac{ns\log(n\lor p)}{h^{2}\lambda^{2}}+\left(\frac{nh^{\bar{k}}}{\lambda}\right)^{2}$,
suppose $h^{\bar{k}}\sqrt{\tilde{s}_{\theta\tau}\log(pn)}\leq\delta_{n}$,
$h^{-2}K_{q}^{2}s\log(pn)\leq\delta_{n}n$, $\lambda K_{q}^{2}\sqrt{s}\leq\delta_{n}n$,
$h^{-2}s\tilde{s}_{\theta\tau}\log(pn)\leq\delta_{n}n$, $\lambda\sqrt{s\tilde{s}_{\theta\tau}\log(pn)}\leq\delta_{n}n$,
and $K_{q}^{2}\tilde{s}_{\theta\tau}\log^{2}(pn)\log^{3}(n)\leq\delta_{n}n$.

\textit{Condition N} (1) $\pi$ is bounded away from zero, that is
$\pi>c$. (2) Let $\iota_{i}=\pi_{i}-(1-\tau)$. The conditional distribution
function of $\iota_{i}$ is absolutely continuous with continuously
differentiable density $f_{\iota_{i}|d_{i},z_{i}}(\cdot|d_{i},z_{i})$
such that $f_{\iota_{i}}\leq\bar{f}_{\iota}\leq C$.

\subsubsection{Main results}

The following result states that DMLCQR estimator converges to the
true parameter at a $\sqrt{N}$-rate and is approximately normally
distributed. 
\begin{thm}
Suppose that condiitons AS, M, D ,N hold. Then, the estimator satisfies
\[
\bar{\sigma}_{N}^{-1}\sqrt{N}(\tilde{\theta}-\theta)\overset{d}{\to}N(0,1)
\]
where $\bar{\sigma}_{N}^{2}=\left(\bar{E}_{N}[t_{i}I(h_{i}>0)f_{i}d_{i}v_{i}]\right)^{-1}\bar{E}_{N}\left[\psi^{2}(w_{i},\theta_{0},\eta_{0})\right]\left(\bar{E}_{N}[t_{i}I(h_{i}>0)f_{i}d_{i}v_{i}]\right)^{-1}$. 
\end{thm}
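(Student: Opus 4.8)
The plan is to treat $\tilde\theta$ as a just-identified, self-normalized (continuously-updated) Z-estimator and to run the canonical double/debiased expansion of \citet{chernozhukov2018double}, leaning on the Neyman orthogonality established in Lemma 1 to kill the first-order effect of the nuisance estimates and on the cross-fitting construction to neutralize their entropy. The organizing device throughout is to argue conditionally on the auxiliary samples $(W_i)_{i\in I_k^c}$: within fold $k$ the estimate $\hat\eta_{\tau k}$ is independent of the data $(W_i)_{i\in I_k}$ entering $E_{n,k}$, so that empirical-process fluctuations can be bounded by elementary conditional-variance inequalities rather than by Donsker-type conditions on the nuisance realization set $\mathcal{T}_N$.

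First I would establish consistency $\tilde\theta\overset{p}{\to}\theta_0$. Because $L_{n,k}(\theta)$ is a self-normalized criterion built from a scalar, just-identified moment, its minimizer is asymptotically the root of $E_{n,k}[\hat\psi_i(\theta,\hat\eta_{\tau k})]=0$, and the self-normalizing denominator does not affect the first-order behavior of the point estimate. Identification follows because the population moment $\theta\mapsto\bar E[\psi(w_i,\theta,\eta_0)]$ is continuous, crosses zero only at $\theta_0$, and has nonzero slope there; indeed its derivative equals (up to sign) $G_0=\bar E_N[t_iI(h_i>0)f_id_iv_i]$, which is bounded away from zero by the nondegeneracy of the weighted design in Condition M(1) and the density lower bound in Condition AS(3). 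The requisite convergence rates for the nuisance blocks $(\hat\beta_\tau,\hat\mu_\tau,\hat f)$ are inherited from \citet{belloni2019valid} under Conditions AS, M, D, while the rate for $\hat\pi$ follows from standard logit-Lasso analysis together with the lower bound $\pi>c$ in Condition N(1).

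The core step is the linear (Bahadur) representation. Using that the folds partition $[N]$ with $n=N/K$, I would decompose the cross-fitted moment at $\theta_0$ as
\[
\sqrt{N}\,\frac{1}{K}\sum_{k=1}^{K}E_{n,k}[\psi(w_i,\theta_0,\hat\eta_{\tau k})]=\sqrt{N}\,E_N[\psi(w_i,\theta_0,\eta_0)]+A_N+B_N,
\]
where $A_N=\sqrt{N}\,\tfrac{1}{K}\sum_k(E_{n,k}-\bar E)[\psi(w_i,\theta_0,\hat\eta_{\tau k})-\psi(w_i,\theta_0,\eta_0)]$ is the empirical-process term and $B_N=\sqrt{N}\,\tfrac{1}{K}\sum_k\bar E[\psi(w_i,\theta_0,\hat\eta_{\tau k})-\psi(w_i,\theta_0,\eta_0)]$ is the bias term. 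Conditional on the auxiliary sample, $A_N$ has zero mean and conditional variance bounded by a nuisance-error modulus times $o(1)$, so $A_N=o_P(1)$ by the complexity and moment conditions in Conditions M(3), D(2), and N. For $B_N$ I would Taylor-expand the smooth population moment in $\eta$: the first-order Gateaux term vanishes identically by Lemma 1, and the residual is a sum of second-order products such as $\|\hat\beta_\tau-\beta_{\tau0}\|\,\|\hat\mu_\tau-\mu_{\tau0}\|$, $\|\hat\pi-\pi_0\|^2$, and the density-estimation error controlled by the bandwidth $h$; the sparsity, bandwidth, and moment restrictions in Conditions M, D, N are calibrated exactly so that each such product is $o_P(N^{-1/2})$, giving $B_N=o_P(1)$. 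Dividing by the empirical Jacobian, which converges to $G_0$ uniformly near $\theta_0$ by the same equicontinuity argument, yields $\sqrt{N}(\tilde\theta-\theta_0)=-G_0^{-1}\sqrt{N}\,E_N[\psi(w_i,\theta_0,\eta_0)]+o_P(1)$. A Lyapunov central limit theorem for independent, non-identically distributed triangular arrays applied to $\sqrt{N}\,E_N[\psi(w_i,\theta_0,\eta_0)]$, whose variance is $\bar E_N[\psi^2(w_i,\theta_0,\eta_0)]$, combined with Slutsky's lemma then delivers the stated sandwich variance $\bar\sigma_N^2$ and asymptotic normality.

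The main obstacle is the non-smoothness of $\psi$ in $(\theta,\beta)$ entering through $I(y_i-d_i\theta-x_i'\beta-r_{\tau i}\le0)$, compounded by the estimated selection indicator $I(\tilde h_{\tau ki}>0)$. Differentiability is available only after integrating out $\epsilon_i$ against its conditional density, i.e. at the population level, so transferring the expansion of $B_N$ to the empirical object with a non-smooth, estimated integrand requires a maximal-inequality/stochastic-equicontinuity argument over the fold data, which cross-fitting makes tractable. The estimated selection set is the most delicate piece: the event $\{\tilde h_{\tau ki}>0\}$ differs from $\{h_{\tau i}>0\}$ only on a shrinking neighborhood of $\pi_i=1-\tau$, and bounding the contribution of this boundary region is precisely what Condition N(2) — the bounded density of $\iota_i=\pi_i-(1-\tau)$ — is designed for, ensuring that the measure of the symmetric difference vanishes fast enough relative to the logit-Lasso rate for $\hat\pi$ to keep it inside the $o_P(N^{-1/2})$ remainder.
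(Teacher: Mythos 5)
Your proposal is correct and follows essentially the same route as the paper: the paper's per-fold Lemma 2 and especially the pooled DML2 proof in Appendix A.4 use exactly your decomposition of the cross-fitted moment into the oracle score $E_N[\psi_{\theta_0,\eta_0}]$, a centered empirical-process term neutralized by conditioning on the auxiliary folds (the stochastic-equicontinuity bound in eq.\ (20), proved via a VC-entropy computation and Theorem 5.1 of \citet{chernozhukov2014gaussian} with $\hat\eta$ held fixed), and a population bias term Taylor-expanded in $\eta$ with the first-order Gateaux term killed by Neyman orthogonality and second-order products $\Gamma_{\eta\eta}$ controlled by the rate conditions (12)--(14), together with the boundary bound $\|I(\hat h_i>0)-I(h_i>0)\|_{2,n}\lesssim\delta_n$ from Condition N(2) and a Lyapunov CLT for the i.n.i.d.\ array, just as you describe. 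Two minor points of difference in execution rather than substance: in the paper the first-order term is only \emph{approximately} zero, $|\Gamma_\eta(\theta_0,\eta_0)[\hat\eta-\eta_0]|=|\bar E[t_iI(h_i>0)f_iv_i(\hat g_i-g_i)]|\leq\delta_n n^{-1/2}$, because the approximation error $r_{\tau i}$ survives the projection orthogonality and must be bounded via Condition M(2) rather than vanishing identically; and since the score is a step function in $\theta$ there is no literal empirical Jacobian to divide by --- the paper instead extracts the slope from the population derivative $\Gamma_\theta(\theta_0,\eta_0)=-\bar E[t_iI(h_i>0)f_id_iv_i]$ and absorbs the $\theta$-fluctuations of the empirical process into a separate VC-subgraph maximal-inequality term $\mathbb{G}_n(\psi_{\check\theta,\eta_0}-\psi_{\theta_0,\eta_0})\lesssim\delta_n^{1/2}$, which is the precise form of the equicontinuity argument you correctly anticipate in your final paragraph.
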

The following result establishes that the variance estimator is consistent. 
\begin{thm}
Suppose that condiitons AS, M, D ,N hold. The variance estimator $\hat{\sigma}_{N}^{2}$
is consistent where
\begin{align*}
\hat{\sigma}_{N}^{2} & =\left(\dfrac{1}{K}\sum_{k=1}^{K}E_{n,k}[t_{i}I(\hat{h}_{k,i}>0)\hat{f}_{k,i}d_{i}\hat{v}_{k,i}]\right)^{-1}\dfrac{1}{K}\sum_{k=1}^{K}E_{n,k}\left[\psi^{2}(w_{i},\tilde{\theta},\hat{\eta}_{k})\right]\left(\dfrac{1}{K}\sum_{k=1}^{K}E_{n,k}[t_{i}I(\hat{h}_{k,i}>0)\hat{f}_{k,i}d_{i}\hat{v}_{k,i}]\right)^{-1}.
\end{align*}
\end{thm}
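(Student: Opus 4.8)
The plan is to establish $\hat{\sigma}_N^2 - \bar{\sigma}_N^2 \overset{p}{\to} 0$. Both quantities share the sandwich form $J^{-1}\Omega J^{-1}$, with population pieces $J_N = \bar{E}_N[t_i I(h_i>0) f_i d_i v_i]$ and $\Omega_N = \bar{E}_N[\psi^2(w_i,\theta_0,\eta_0)]$ and estimated pieces $\hat{J}_N$ and $\hat{\Omega}_N$ as given in the statement. Using continuity of inversion and multiplication together with the uniform lower bound $|J_N| \geq c > 0$, I would reduce the claim to the two convergences $\hat{J}_N - J_N \overset{p}{\to} 0$ and $\hat{\Omega}_N - \Omega_N \overset{p}{\to} 0$. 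The lower bound on $|J_N|$ follows from Conditions AS(3) and M(1), which give $f_i \geq \underline{f} > 0$ and a nondegenerate projection residual $v_i$, so that the inversion is well behaved in the limit.

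For the score-variance term I would exploit cross-fitting: conditional on the auxiliary sample $I_k^c$, the nuisance estimate $\hat{\eta}_{\tau k}$ is nonrandom, so within fold $I_k$ the summands of $E_{n,k}[\psi^2(\tilde{\theta},\hat{\eta}_k)]$ are independent. I would then write $\hat{\Omega}_N - \Omega_N$ as the sum of a sampling error, $\frac{1}{K}\sum_k(E_{n,k}[\psi^2(\theta_0,\eta_0)] - \bar{E}_N[\psi^2(\theta_0,\eta_0)])$, and a plug-in error, $\frac{1}{K}\sum_k(E_{n,k}[\psi^2(\tilde{\theta},\hat{\eta}_k)] - E_{n,k}[\psi^2(\theta_0,\eta_0)])$. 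The sampling error vanishes by a law of large numbers for the i.n.i.d.\ triangular array, since $\psi$ is bounded (the weight $h_{\tau i}$, the indicators, and $(1-\tau)/\pi_i$ are bounded under Condition N(1)) while $v_i = d_i - x_i'\mu_\tau$ has $q\geq 4$ bounded moments under Condition M. The plug-in error is bounded by adding and subtracting intermediate terms and invoking the consistency $\tilde{\theta}-\theta_0 = o_p(1)$ from Theorem 1 together with the nuisance rates $\|\hat{\beta}_{\tau k}-\beta_\tau\|$, $\|\hat{\mu}_{\tau k}-\mu_\tau\|$, and $\|\hat{\pi}_k - \pi\|$ implied by Conditions AS, M, and D; the nonsmooth indicator $I(y_i - d_i\theta - x_i'\beta \leq 0)$ is handled by taking conditional expectations given the covariates and using the bounded density of $\epsilon_i$ from Condition AS(3), which converts a small shift in the argument of the indicator into a correspondingly small change in expectation.

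For the Jacobian term $\hat{J}_N$ the same sampling-error/plug-in-error decomposition applies, the only new ingredient being the estimated conditional density $\hat{f}_{k,i}$. I would first establish average consistency $\hat{f}_{k,i} - f_i \overset{p}{\to} 0$ from the difference-quotient construction of Comment 3, using Condition D to control the approximation error of $\hat{Q}(\tau \pm h|\cdot)$ and the shrinking bandwidth $h = h_n \to 0$. Given density consistency, the factors $\hat{v}_{k,i} = d_i - x_i'\tilde{\mu}_{\tau k}$ and $I(\hat{h}_{k,i}>0)$ are treated exactly as in the score term, using the rate for $\tilde{\mu}_{\tau k}$ and the consistency of $\hat{\pi}_k$, hence of $\hat{h}_{k,i}$, from the Logit Post-Lasso step.

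I expect the main obstacle to be the plug-in error for the score variance, namely controlling the contribution of the nonsmooth indicator once $\tilde{\theta}$ and $\hat{\eta}_k$ are substituted. Standard smooth-function empirical-process arguments do not apply directly, so I would rely on the fact that $\psi^2$, although discontinuous in its argument, is bounded and, after conditioning on the covariates, varies smoothly enough in distribution for the density bound of Condition AS(3) to absorb the jumps. Cross-fitting is essential here: by treating $\hat{\eta}_k$ as fixed within each fold it avoids any Donsker or entropy requirement on the realized nuisance class, reducing the task to a conditional law of large numbers plus a bias calculation governed by the nuisance convergence rates guaranteed by the regularity conditions.
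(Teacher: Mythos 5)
Your proposal is correct and takes essentially the same route as the paper: the paper's Claims 1 and 2 are precisely your fold-wise convergences of the Jacobian and score-variance terms, each split into a sampling error handled by a law of large numbers (with cross-fitting rendering $\hat{\eta}_k$ fixed within a fold) and a plug-in error controlled by Cauchy--Schwarz with the $\left\Vert \cdot\right\Vert _{2,n}$ nuisance rates, the nonsmooth indicator being absorbed through the bounded density of $\epsilon_{i}$ exactly as you describe (the paper routes this through the pre-verified empirical-norm bound (19) of Assumption IQR, and reduces the $\psi^{2}$ difference to $\left\Vert \psi(\tilde{\theta},\hat{\eta})-\psi(\theta_{0},\eta_{0})\right\Vert _{2,n}$ via a difference-of-squares factorization, which is the same substance as your add-and-subtract scheme). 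The only cosmetic difference is that you propose to derive consistency of $\hat{f}_{k,i}$ explicitly from the difference-quotient construction under Condition D, whereas the paper imports the needed bounds on $\hat{f}_{k,i}$ directly from Condition IQR.
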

Theorem 1 and 2 can be used for construction of confidence regions,
which are uniformly valid over a large class of data-generating processes.
I formalize the uniform properties as below.
\begin{cor}
Let $\mathcal{P}_{n}$ be the collection of all distributions of $\{(y_{i},d_{i},z_{i}^{\prime})\}_{i=1}^{n}$
for which Conditions AS, M, D, N are satisfied for given $n\geq1$.
Then the confidence interval
\[
CI=\left(\tilde{\theta}\pm\Phi^{-1}(1-\xi/2)\sqrt{\hat{\sigma}_{N}^{2}/N}\right)
\]
obeys
\[
\lim_{N\to\infty}\sup_{P\in\mathcal{P}_{n}}\left|P\left(\theta_{0}\in CI\right)-(1-\xi)\right|=0.
\]
\end{cor}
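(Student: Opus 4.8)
The plan is to derive the uniform coverage statement from Theorems 1 and 2 by the standard device of reducing a uniform-over-$\mathcal{P}_n$ claim to a statement about arbitrary sequences of data-generating processes, and then closing the argument with a Slutsky step that exploits the continuity of the standard normal limit. The key point is that the regularity conditions are stated in a $P$-free form, so the conclusions of Theorems 1 and 2 hold not merely at a fixed $P$ but along any admissible sequence.

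First I would record the elementary equivalence: writing $a_N(P)=P(\theta_0\in CI)-(1-\xi)$, the claim $\lim_N\sup_{P\in\mathcal{P}_n}|a_N(P)|=0$ holds if and only if $a_N(P_N)\to 0$ for every sequence $\{P_N\}$ with $P_N\in\mathcal{P}_N$ (the index $n$ growing with the sample size $N=Kn$). The forward direction is immediate since $|a_N(P_N)|\le\sup_P|a_N(P)|$; for the converse, if the supremum did not vanish one could pass to a subsequence on which it stays above some $\epsilon>0$ and select near-maximizers $P_N$, producing a sequence with $\limsup_N|a_N(P_N)|\ge\epsilon$, a contradiction. Hence it suffices to fix an arbitrary admissible sequence $\{P_N\}$ and show $P_N(\theta_0\in CI)\to 1-\xi$.

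Second, because Conditions AS, M, D, N are phrased entirely through the fixed constants $(c,C,q)$ and the universal sequences $(l_n,\delta_n,\Delta_n)$, none of which depend on the particular $P$, Theorems 1 and 2 apply verbatim along $\{P_N\}$. Setting $t_N=\bar{\sigma}_N^{-1}\sqrt{N}(\tilde{\theta}-\theta_0)$, Theorem 1 gives $t_N\overset{d}{\to}N(0,1)$ under $P_N$, and Theorem 2 gives $R_N:=\hat{\sigma}_N/\bar{\sigma}_N\overset{p}{\to}1$ under $P_N$. Here I would also note that the sandwich form of $\bar{\sigma}_N^2$ in Theorem 1 is bounded away from $0$ and $\infty$ uniformly over $\mathcal{P}_n$: the moment bounds in Condition M(1) control the ``meat'' $\bar{E}_N[\psi^2]$ and bound the ``bread'' $\bar{E}_N[t_iI(h_i>0)f_id_iv_i]$ away from zero, while Condition AS(3) keeps $f_i\in[\underline{f},\bar{f}]$. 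This uniform boundedness is what guarantees that consistency of $\hat{\sigma}_N^2$ upgrades to $R_N\overset{p}{\to}1$ and that the standardization $t_N$ is non-degenerate. I would then write the coverage event in standardized form,
\[
P_N(\theta_0\in CI)=P_N\!\left(|t_N|\le\Phi^{-1}(1-\xi/2)\,R_N\right),
\]
and apply Slutsky's theorem to $(t_N,R_N)\overset{d}{\to}(Z,1)$ with $Z\sim N(0,1)$. Since $Z$ has a continuous distribution, the points $\pm\Phi^{-1}(1-\xi/2)$ are continuity points and the map $(t,r)\mapsto I(|t|\le\Phi^{-1}(1-\xi/2)\,r)$ is a.s.\ continuous at the limit, so the probability converges to $P(|Z|\le\Phi^{-1}(1-\xi/2))=2\Phi(\Phi^{-1}(1-\xi/2))-1=1-\xi$. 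Combining with the first step yields the uniform statement.

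I expect the real content to sit in the second step: the assertion that Theorems 1 and 2 are genuinely uniform, i.e.\ that their conclusions persist along an arbitrary sequence $\{P_N\}$ rather than a single fixed $P$. This requires every stochastic bound entering their proofs---the convergence rates of the nuisance estimators $(\tilde{\pi},\tilde{\beta}_\tau,\tilde{\mu}_\tau)$, the linearization error of the weighted and rotated quantile regression, and the cross-fitting remainder---to be dominated by the fixed constants and sequences of Conditions AS, M, D, N, with no hidden dependence on $P$ slipping into the constants. Since the conditions are deliberately cast in this $P$-free form following \citet{belloni2019valid} and \citet{chernozhukov2018double}, the uniformity is inherited from the underlying lemmas, but verifying that it is preserved through the sandwich variance and the density-estimation step of Comment 3 is where the argument must be checked carefully.
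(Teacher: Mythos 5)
Your proposal is correct and is essentially the argument the paper relies on: the paper gives no separate proof of this corollary, presenting it as an immediate consequence of Theorems 1 and 2, which are themselves proved in the triangular-array framework with conditions stated via $P$-free constants $(c,C,q)$ and universal sequences $(\delta_n,\Delta_n)$ --- exactly the uniformity-via-arbitrary-sequences reduction you make explicit. Your subsequence/near-maximizer step, the Slutsky argument using continuity of the normal distribution at $\pm\Phi^{-1}(1-\xi/2)$, and the observation that $\bar{\sigma}_N^2$ must be bounded away from $0$ and $\infty$ (via Condition M(1), AS(3), and IQR(1)) to upgrade Theorem 2's consistency to $\hat{\sigma}_N/\bar{\sigma}_N\overset{p}{\to}1$ are precisely the standard details the paper leaves implicit, following the convention of \citet{belloni2019valid} and \citet{chernozhukov2018double}.
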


\section{Monte Carlo simulation}

I conduct a simulation study to evaluate the finite sample performance
of the proposed estimators and confidence regions. I focus on the
case of $\tau=0.5$ and $\tau=0.75$ under the following data-generating
process:
\begin{align}
y_{i}^{*} & =d_{i}\theta_{\tau}+x_{i}^{\prime}(c_{y}\nu_{y})+\varepsilon_{i}\\
d_{i} & =x_{i}^{\prime}(c_{d}\nu_{d})+v_{i}
\end{align}
where $y_{i}=\max(y_{i}^{*},c)$ and $c$ is the 0.3th sample quantile
of $y_{i}^{*}$ in each replication sample. $\theta_{\tau}=1$ is
the parameter of interst, and $x_{i}=(1,z_{i})$ consists of an intercept
and covariate $z_{i}\sim N(0,\Sigma)$. The regressors are correlated
in that $\Sigma_{ij}=\rho^{|i-j|}$ and $\rho=0.5$. The error terms
will follow standard normal distribution, that is $\varepsilon_{i}\sim N(0,1)$
and $v_{i}\sim N(0,1)$. 

For the parameters, I consider the case where
\begin{align*}
\nu_{y} & =(1,1/2,1/3,1/4,1/5,0,0,0,0,0,1,1/2,1/3,1/4,1/5,0,0,\dots,0)\\
\nu_{d} & =(1,1/2,1/3,1/4,1/5,1/6,1/7,1/8,1/9,1/10,0,0,\dots,0).
\end{align*}
Note that a subset of controls have non-zero coefficients indicating
that only a few controls are relevant to the outcome. The coefficients
exhibit a declining pattern which makes $l_{1}$-based model selectors
more likely to make model selection mistakes on variables with smaller
coefficients. The coefficients $(c_{y},c_{d})$ are used to control
$R^{2}$ in equation (9) and (10), which I refer to respectively as
$R_{y}^{2}$ and $R_{d}^{2}$. Changing the values of $(c_{y},c_{d})$
results in varying levels of signal strength, which either makes it
easier or harder for $l_{1}$-based model selectors to detect the
controls with nonzero coefficients. I peform 500 Monte Carlo repetitions
for each design scenario. 

In the first exercise, I focus on the setting with $R_{y}^{2}=R_{d}^{2}=0.75$
by setting the appropriate $(c_{y},c_{d})$ values. I compare the
performance of four different estimators:
\begin{enumerate}
\item Naive post-selection estimator (Naive PS): Estimator of $\theta_{\tau}$
based on post-Lasso estimator of \citet{buchinsky1998alternative}.
Uses model selection but do not use sample splitting and orthogonal
moment.
\item CQR estimator (HDCQR): Estimator of $\theta_{\tau}$ based on \citet{buchinsky1998alternative}
using all $p$ variables.
\item DML-CQR estimator (DMLCQR): The proposed estimator. Uses model selection,
sample splitting, and orthogonal moment.
\item Oracle estimator (Oracle): Estimator of $\theta_{\tau}$ based on
\citet{buchinsky1998alternative} using only relevant controls among
$p$ variables. Assumes the knowledge of the true identity of controls. 
\end{enumerate}
In Table 1 and Figure 1, I provide results for $(n,p)=(500,300)$
and employ 2-fold and 4-fold cross-fitting for the DMLCQR estimator.
As predicted theoretically, the Naive PS and HDCQR estimators show
biases, while the DMLCQR and Oracle estimators exhibit low biases
and their distributions are centered at the true value. This implies
that when incorporating high-dimensional controls in censored quantile
regression models, a naive implementation of model selection leads
to regularization and overfitting biases in the target parameter.
Also, using the original estimator with high-dimensional controls
induces bias and results in a loss of power to learn about the targeted
effect. Therefore, if we use either of these estimators to conduct
hypothesis testing, our inferencial results may be misleading. The
simulation results are consistent across different quantiles.

\newpage

\begin{table}[H]
\textbf{\caption{Summary of simulation results}
}

\medskip
\centering{}%
\begin{tabular}{>{\raggedright}p{1.5cm}>{\centering}p{2cm}>{\centering}p{2cm}>{\centering}p{2cm}>{\centering}p{2cm}>{\centering}p{2cm}>{\centering}p{2cm}}
\hline 
\noalign{\vskip\doublerulesep}
 &  & Naive PS & HDCQR & DMLCQR

(2-fold) & DMLCQR

(4-fold) & Oracle\tabularnewline[\doublerulesep]
\hline 
\noalign{\vskip\doublerulesep}
\multirow{4}{1.5cm}{$\tau=0.5$} & RMSE & 0.227 & 0.175 & 0.131 & 0.130 & 0.068\tabularnewline[\doublerulesep]
\cline{2-7} \cline{3-7} \cline{4-7} \cline{5-7} \cline{6-7} \cline{7-7} 
\noalign{\vskip\doublerulesep}
 & SD & 0.094 & 0.160 & 0.130 & 0.130 & 0.067\tabularnewline[\doublerulesep]
\cline{2-7} \cline{3-7} \cline{4-7} \cline{5-7} \cline{6-7} \cline{7-7} 
\noalign{\vskip\doublerulesep}
 & Bias & 0.206 & -0.070 & 0.016 & 0.013 & -0.010\tabularnewline[\doublerulesep]
\cline{2-7} \cline{3-7} \cline{4-7} \cline{5-7} \cline{6-7} \cline{7-7} 
\noalign{\vskip\doublerulesep}
 & MAE & 0.207 & 0.140 & 0.100 & 0.095 & 0.053\tabularnewline[\doublerulesep]
\hline 
\noalign{\vskip\doublerulesep}
\multirow{4}{1.5cm}{$\tau=0.75$} & RMSE & 0.192 & 0.175 & 0.106 & 0.93 & 0.073\tabularnewline[\doublerulesep]
\cline{2-7} \cline{3-7} \cline{4-7} \cline{5-7} \cline{6-7} \cline{7-7} 
\noalign{\vskip\doublerulesep}
 & SD & 0.086 & 0.161 & 0.101 & 0.091 & 0.073\tabularnewline[\doublerulesep]
\cline{2-7} \cline{3-7} \cline{4-7} \cline{5-7} \cline{6-7} \cline{7-7} 
\noalign{\vskip\doublerulesep}
 & Bias & 0.172 & -0.069 & 0.032 & 0.020 & -0.006\tabularnewline[\doublerulesep]
\cline{2-7} \cline{3-7} \cline{4-7} \cline{5-7} \cline{6-7} \cline{7-7} 
\noalign{\vskip\doublerulesep}
 & MAE & 0.173 & 0.140 & 0.081 & 0.073 & 0.058\tabularnewline[\doublerulesep]
\hline 
\noalign{\vskip\doublerulesep}
\multicolumn{7}{>{\raggedright}p{15cm}}{{\footnotesize{}Note: The table presents root-mean-squared-error (RMSE),
standard deviation (SD), mean bias (Bias), and mean absolute error
(MAE) from the simulation results. Estimators include naive post-selection
estimator (Naive PS),} {\footnotesize{}\citet{buchinsky1998alternative}
with high-dimensional controls (HDCQR), the proposed estimator (DMLCQR)
using 2-fold and 4-fold cross-fitting, and oracle estimator (Oracle)
which applies \citet{buchinsky1998alternative} only with relevant
controls.}}\tabularnewline[\doublerulesep]
\end{tabular}
\end{table}
\begin{figure}[H]
\caption{Histograms of simulation results}

\centering{}\includegraphics[width=0.75\textwidth]{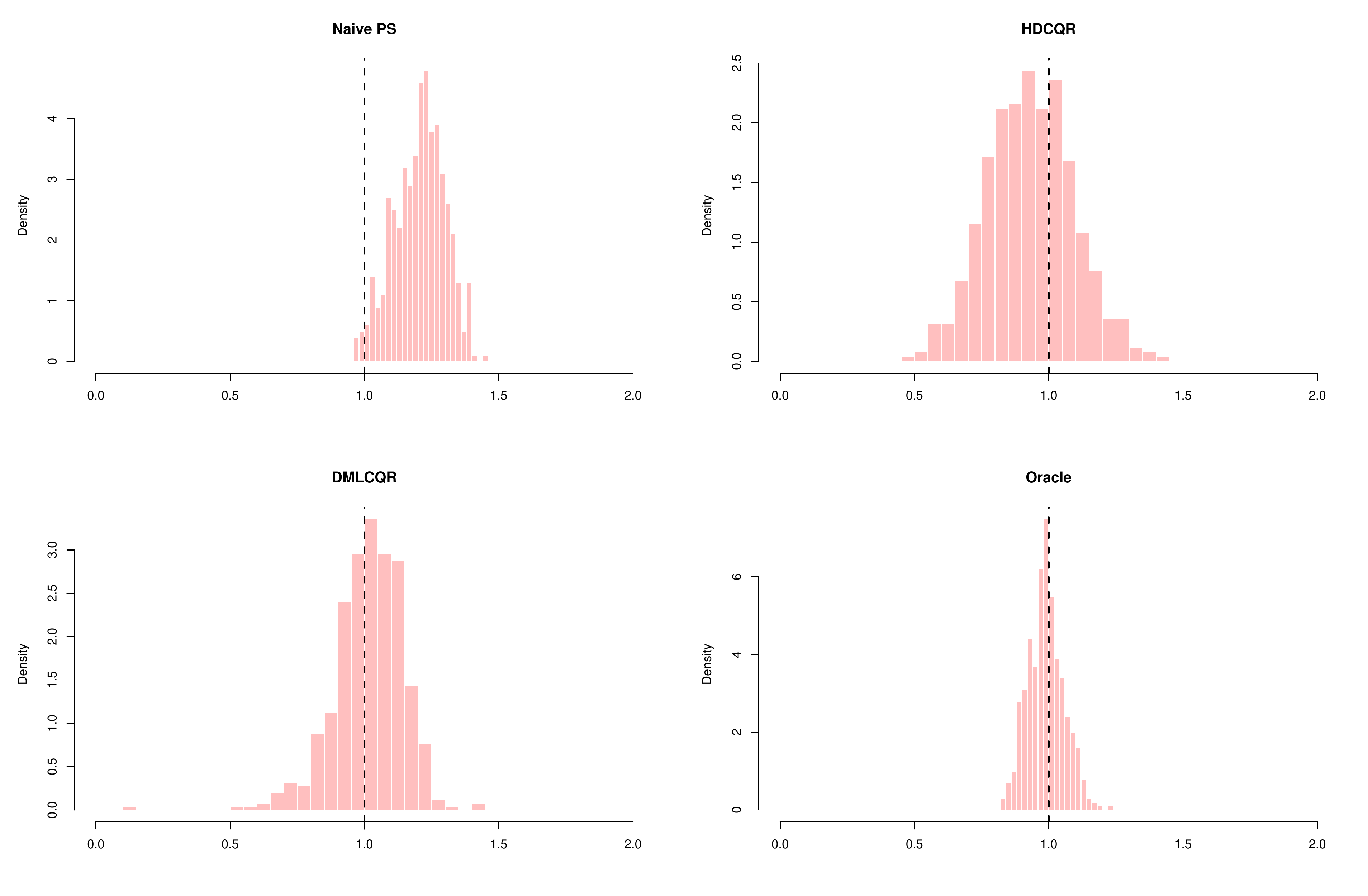}
\end{figure}

\newpage

I now examine confidence intervals of different designs by varying
\[
(R_{d}^{2},R_{y}^{2})\in\left\{ 0.1,0.2,0.3,0.4,0.5,0.6,0.7,0.8,0.9\right\} ^{2},
\]
which gives us 81 different designs. Figure 2 shows the rejection
frequencies of confidence intervals with a nominal significance level
of 5\% for the Naive PS estimator and DMLCQR estimator with 2-fold
cross-fitting. Ideally, the rejection rate should be the nomial level
of 5\% regardless of the underlying data generating process. The confidence
regions for the Naive PS estimator deviate greatly from the ideal
level, while those for the DMLCQR estimator are flat with a rejection
rate close to the nominal level. Therefore, the simulation results
demonstrate that the DMLCQR estimator has satisfactory finite sample
performance compared to conventional alternatives.
\begin{center}
\begin{figure}[H]
\textbf{\caption{Rejection probabilities of confidence regions with nominal coverage
of 95\%}
}

\subfloat{\includegraphics[width=0.5\textwidth]{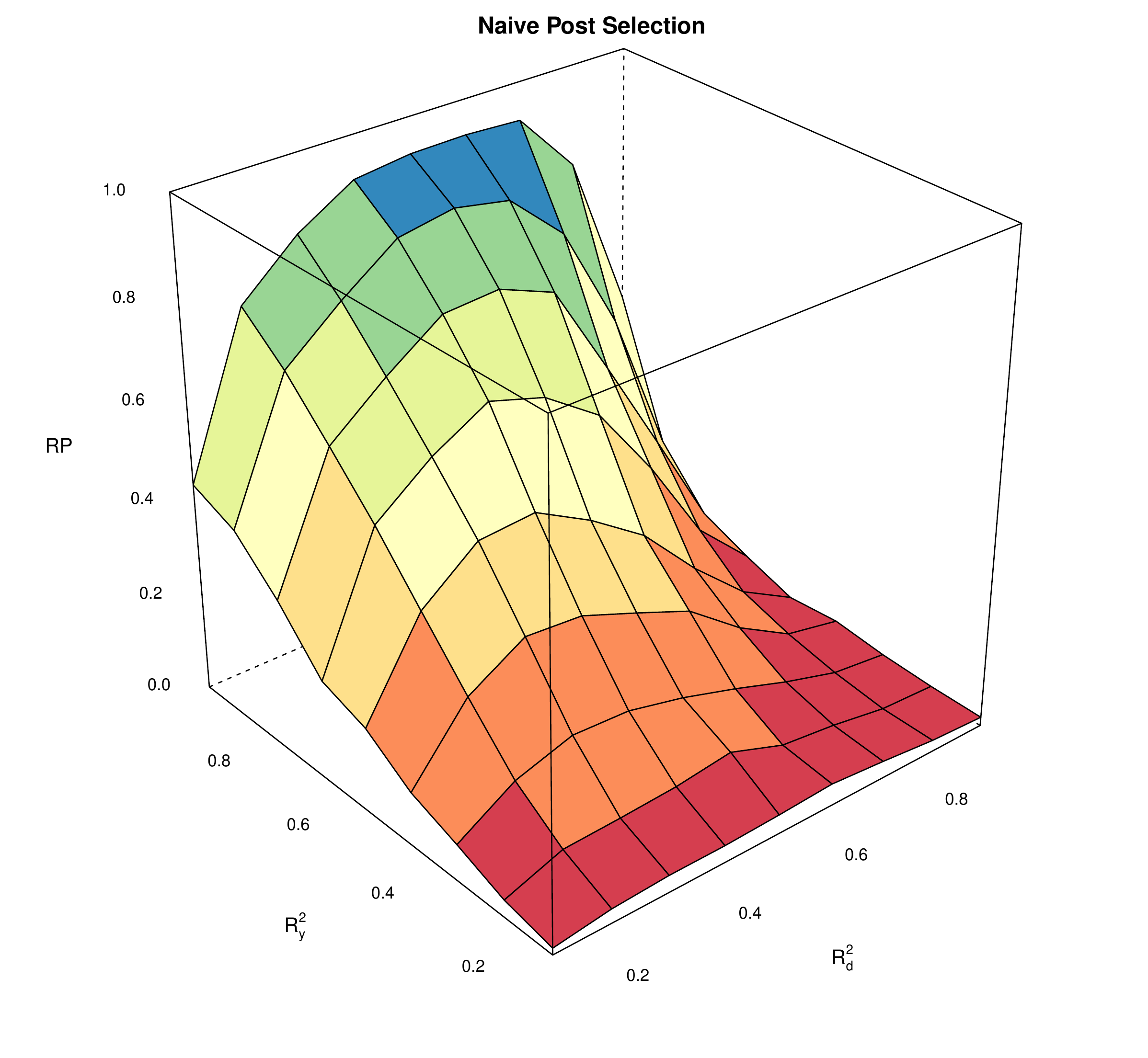}}\subfloat{\includegraphics[width=0.5\textwidth]{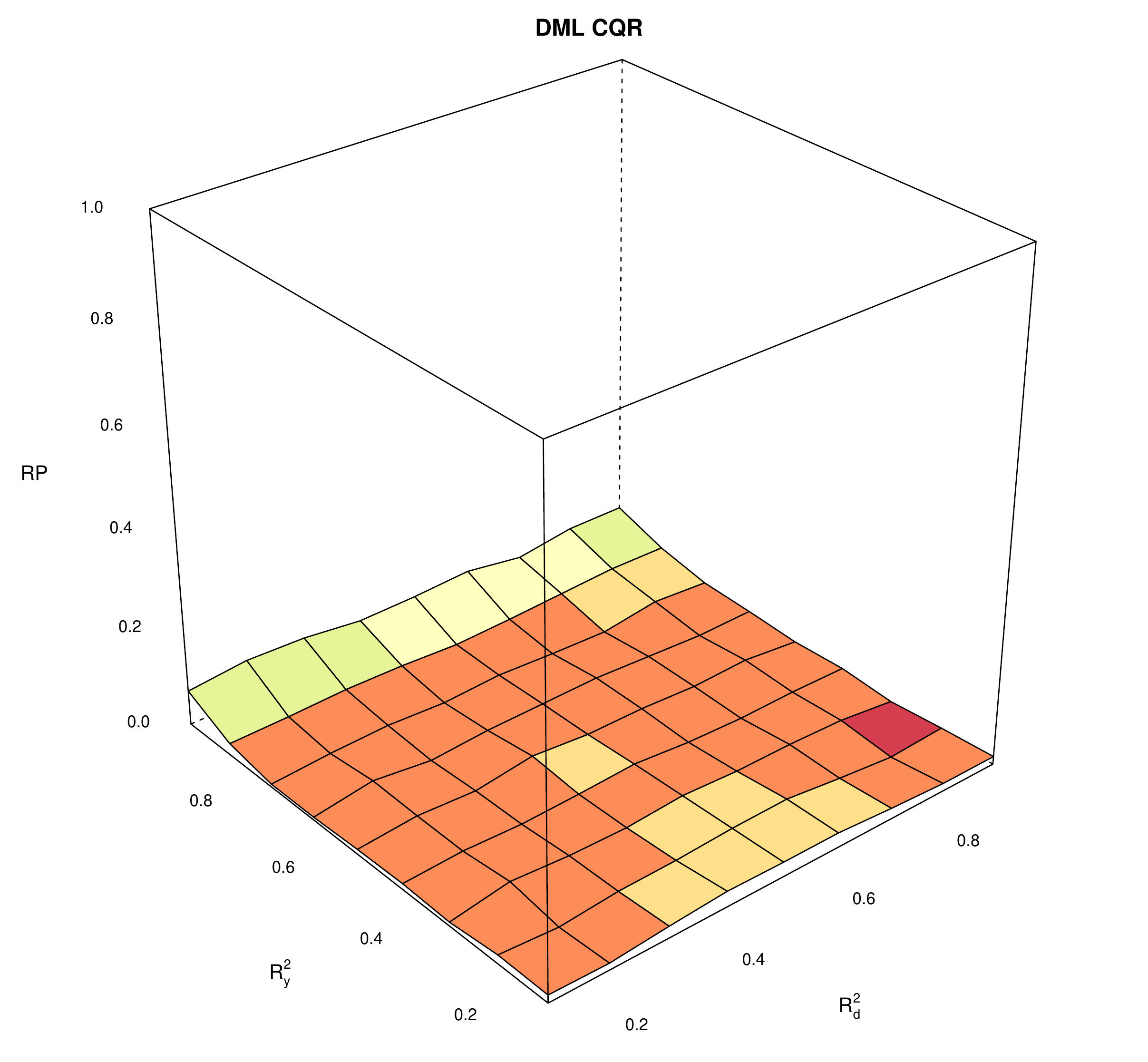}}
\end{figure}
\par\end{center}

\section{Empirical application}

In this section, I apply the DML-CQR estimator to estimate the quantile
treatment effect of 401(k) eligibility on total financial assets.
Since working for a firm that offers a 401(k) plan is non-random,
\citet{poterba1994401} and \citet{poterba1995401} developed a strategy
that takes 401(k) eligibility as exogenous, after conditioning on
control variables including income. They argued that when 401(k) plans
were first introduced, people did not base their employment decisions
on the availability of 401(k) plans, but instead focused on factors
like income and other job characteristics. To validate the selection
on observables assumption, a flexible set of controls that are correlated
with both 401(k) eligibility and the outcome should be included. This
led \citet{belloni2017program} and \citet{chernozhukov2017double}
to use high-dimensional controls and post-selection inference methods
to study the effect of 401(k) eligibility on household's net financial
assets. They revisited previous empirical findings that assumed that
the confounding effects could be adequately controlled for by a small
number of ex ante chosen controls. 

I conduct a similar study to \citet{belloni2017program} and \citet{chernozhukov2017double},
but using total financial asset as the outcome. In the data, 12.4
percent of households reported zero total financial assets, which
is naturally bounded by zero, in contrast to the net financial assets,
which can be negative. Although this is a corner solution outcome
and not from bottom coding, we need to use an econometric model with
a censored outcome to accurately incorporate the data structure. The
treatment of interest is 401(k) eligibility, not participation, so
the parameter of interest is the intent-to-treat effect. \citet{poterba1995401}
conducted a median regression of 401(k) eligibility on total financial
assets, but they did not consider that the outcome was censored and
focused soley on the median quantile. 

I use the same data as \citet{chernozhukov2004effects}. The data
set consists of 9915 household-level observations obtained from the
1991 SIPP. Two households were removed due to reporting negative income,
leaving a sample size of $N=9913$. The outcome variable $Y$ is total
financial assets, which is defined as the sum of IRA balances, 401(k)
balances, and other financial assets. The treatment variable $D$
is a binary indicator of eligibility to enroll in a 401(k) plan. The
vector of raw covariates $X$ includes income, age, family size, years
of education, marriage status indicator, a two-earner status indicator,
a defined benefit pension status indicator, an IRA participation indicator,
and home ownership indicator. Further information about the data can
be found in \citet{chernozhukov2004effects}.

I consider three different sets of controls $f(X)$, modifying the
specification outlined in \citet{belloni2017program}. The first specification
includes indicators of marriage status, two-earner status, defined
benefit pension status, IRA participation, and home ownership status,
as well as second-order polynomials in family size and education,
a third-order polynomial in age, and a quadratic spline in income
with six break points. The second specification adds interactions
between all non-income variables and interactions between non-income
variables and each term in the income spline to the first specification.
The third specification augments interactions between interaction
of non-incomevariables and the income terms. These three specifications
are referred to as the \textit{low-p}, \textit{high-p}, and \textit{very
high-p} specifications, respectively. The dimensions of controls for
the first, second, and third specifications are 20, 182, and 710.
Additionally, I use 20, 157, and 498 variables for methods that do
not use variable selection by removing perfectly collinear terms. 

Figure 3 presents the quantile treatment effect estimates at quantiles
$\tau=0.15,0.2,\dots,0.9$ and 95\% confidence intervals. In the \textit{low-p}
specification, I apply conventional quantile regression and censored
quantile regression estimator of \citet{buchinsky1998alternative}
to compare whether the estimates change according to taking censored
outcome into account in the model. The results indicate that using
censored quantile regression model increases the QTE at lower quantiles
closer to the censoring point. This makes sense as conditional quantiles
at lower quantiles are more likely to be affected by censoring. Therefore,
the result recommends the use of censored outcome models rather than
ignoring the latent data structure. In the \textit{high-p }specification,
I compare the results using DMLCQR estimator and \citet{buchinsky1998alternative},
where the latter could be understood as HDCQR estimator in the simulation.
It shows that incorporating high-dimensional controls in \citet{buchinsky1998alternative}
gives similar results to using quantile regression in the \textit{low-p}
specification, while DMLCQR estimates are consistent with estimates
in the \textit{low-p} specification accounting for censoring. This
implies that implementing the conventional censored quantile regression
estimator with high-dimensional control could lead to somewhat erratic
estimates, close to using quantile regression with low-dimensional
data. Lastly, the application of DMLCQR in the \textit{very high-p}
specification has similar results as in the \textit{high-p} specification.
As the results using DMLCQR with \textit{high-p} and \textit{very
high-p} are very close to censored quantile regression estimates in
the \textit{low-p, }my result complements previous works that were
based on the assumption that a small number of ex ante chosen controls
are enough to control the confounding effect.
\begin{center}
\begin{figure}[H]
\caption{QTE estimates of the effect of 401(k) eligibility on total financial
assets}

\centering{}\includegraphics[width=1\textwidth]{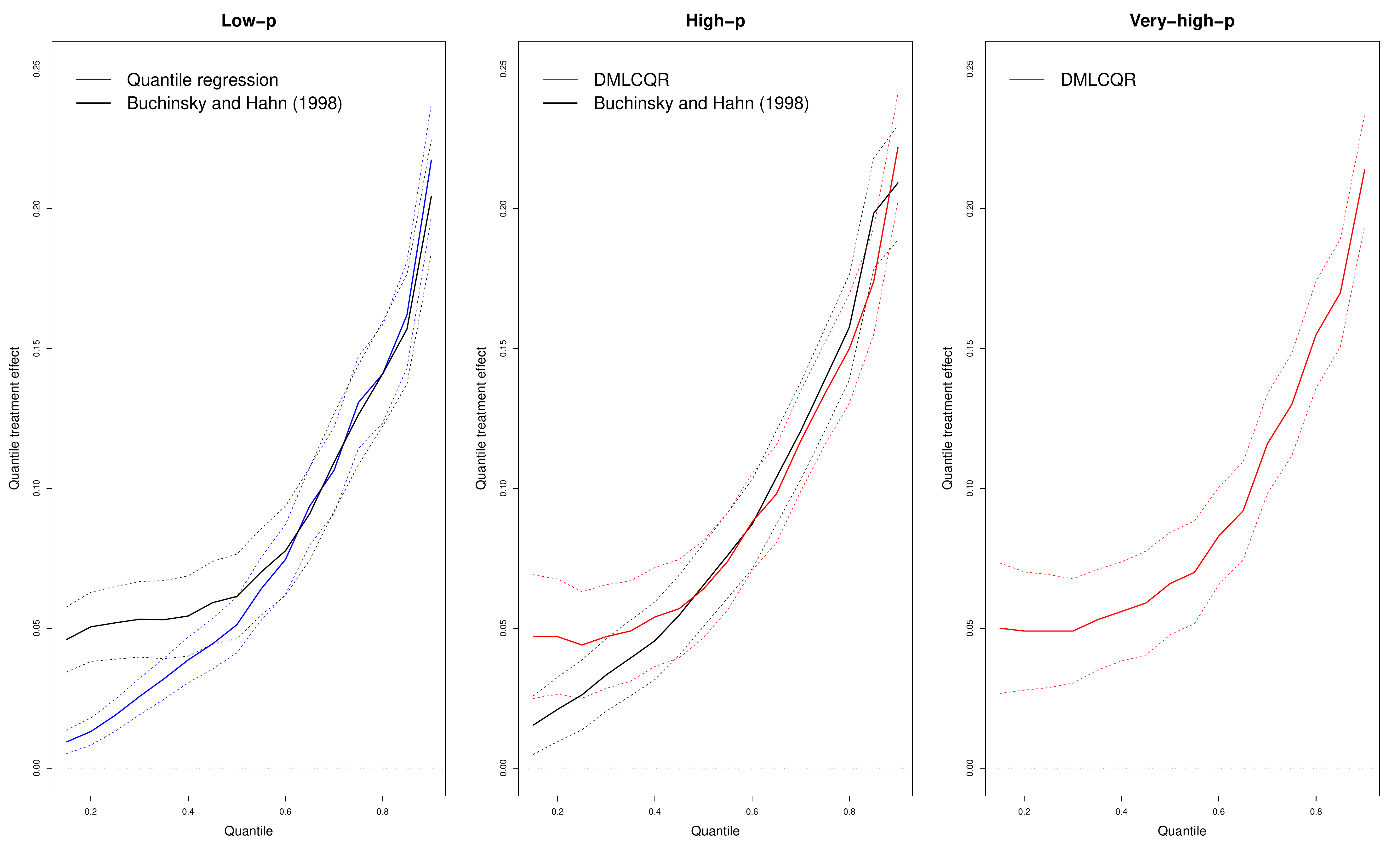}
\end{figure}
\par\end{center}

\section{Conclusion }

Censored quantile regression has been popularly used in the causal
inference literature. Some examples of its applications include the
effect of tax incentives on charitable contributions (\citet{fack2010tax}),
the effect of pollution on productivity (\citet{graff2012impact}),
and the price elasticity of medical expenditure (\citet{kowalski2016censored}).
Empirical researchers frequently face the challenge of selecting important
control variables from a rich set of potential controls, which could
be alleviated by employing machine learning methods with regularization
to estimate high-dimensional nuisance parameters. I propose the DMLCQR
estimator, which improves upon the censored quantile regression estimator
of \citet{buchinsky1998alternative}, and enables researchers to use
high-dimensional control variables while obtaining valid inference
for the target parameter. This additional benefit makes censored quantile
regression more practical for empirical researchers to analyze wider
types of data sets and employ a broader set of machine learning methods
for estimation. 

\newpage

\part*{Appendix}

\section*{Notation}

I work with triangular array data $\left\{ \omega_{i,n};i=1,\dots,n;n=1,2,3,\dots\right\} $
where for each $n$, $\left\{ \omega_{i,n};i=1,\dots,n\right\} $
is defined on the probability space $\left(\Omega,\mathcal{S},P_{n}\right)$.
Each $\omega_{i,n}=(y_{i,n}^{\prime},d_{i,n}^{\prime},z_{i,n}^{\prime})^{\prime}$
is a vector which are independent across $i$ but not necessarily
identically distributed (i.n.i.d.). Therefore, while all parameters
that characterize the distribution of $\left\{ \omega_{i,n};i=1,\dots,n\right\} $
are implicitly indexed by $P_{n}$ and $n$, I omit this dependence
from the notation to maintain simplicity. I use the following notation:
$E_{n}[f]=n^{-1}\sum_{i=1}^{n}f(\omega_{i})$, $\bar{E}_{n}[f]=E[E_{n}[f]]=\dfrac{1}{n}\sum_{i=1}^{n}E\left[f(\omega_{i})\right]$,
and $\mathbb{G}_{n}(f(\omega_{i}))=\dfrac{1}{\sqrt{n}}\sum_{i=1}^{n}\left(f(\omega_{i})-E[f(\omega_{i})]\right)$.
The $l_{2}$-norm is denoted by $\left\Vert \cdot\right\Vert $; the
$l_{0}$-norm $\left\Vert \cdot\right\Vert _{0}$ denotes the number
of nonzero components of a vector; and the $l_{\infty}$-norm $\left\Vert \cdot\right\Vert _{\infty}$
denotes the maximal absolute value in the components of a vector.
For a sequence $\left(t_{i}\right)_{i=1}^{n}$, we denote $\left\Vert t_{i}\right\Vert _{2,n}=\sqrt{E_{n}\left[t_{i}^{2}\right]}$.
We use the notation $a\lesssim b$ to denote $a\leq cb$ for some
constant $c>0$ that does not depend on $n$; and $a\lesssim_{P}b$
to denote $a=O_{p}(b)$. 

\section*{A Proof of the theoretical results}

\subsection*{A.1 Proof of Lemma 1}

{\small{}Moment condition: }From (2), the moment conditions for $\theta_{\tau}$
and $\beta_{\tau}$ are
\begin{align}
E\left[\dfrac{\partial g(w_{i},\theta,\beta,\pi)}{\partial\theta}\right] & =E\left[t_{i}I(h_{\tau i}>0)(h_{\tau i}-I(y_{i}-d_{i}\theta_{\tau}-x_{i}^{\prime}\beta_{\tau}\leq0))d_{i}\right]=0\\
E\left[\dfrac{\partial g(w_{i},\theta,\beta,\pi)}{\partial\beta}\right] & =E\left[t_{i}I(h_{\tau i}>0)(h_{\tau i}-I(y_{i}-d_{i}\theta_{\tau}-x_{i}^{\prime}\beta_{\tau}\leq0))x_{i}\right]=0.
\end{align}
Therefore,
\begin{align*}
E\left[\psi(w_{i},\theta_{0},\eta_{0})\right] & =E\left[I\left(h_{\tau i}>0\right)\left(t_{i}\left\{ h_{\tau i}-I(y_{i}-d_{i}\theta_{\tau}-x_{i}'\beta_{\tau}-r_{\tau i}\leq0)\right\} +(t_{i}-\pi_{i})\dfrac{(1-\tau)}{\pi_{i}}\right)(d_{i}-x_{i}^{\prime}\mu_{\tau})\right]\\
 & =E\left[I\left(h_{\tau i}>0\right)t_{i}\left\{ h_{\tau i}-I(y_{i}-d_{i}\theta_{\tau}-x_{i}'\beta_{\tau}-r_{\tau i}\leq0)\right\} d_{i}\right]\\
 & \;-E\left[I\left(h_{\tau i}>0\right)t_{i}\left\{ h_{\tau i}-I(y_{i}-d_{i}\theta_{\tau}-x_{i}'\beta_{\tau}-r_{\tau i}\leq0)\right\} x_{i}^{\prime}\mu_{\tau}\right]\\
 & \;+E\left[(t_{i}-\pi_{i})I\left(h_{\tau i}>0\right)\dfrac{(1-\tau)}{\pi_{i}}(d_{i}-x_{i}^{\prime}\mu_{\tau})\right]\\
 & =E\left[(t_{i}-\pi_{i})I\left(h_{\tau i}>0\right)\dfrac{(1-\tau)}{\pi_{i}}(d_{i}-x_{i}^{\prime}\mu_{\tau})\right]\;(\because(6),(7))\\
 & =E\left[E[t_{i}-\pi_{i}|w_{i}]I\left(h_{\tau i}>0\right)\dfrac{(1-\tau)}{\pi_{i}}(d_{i}-x_{i}^{\prime}\mu_{\tau})\right]\;(\because\text{Law of iterated expectations})\\
 & =0
\end{align*}
{\small{}and the moment condition} $E\left[\psi(w_{i},\theta_{0},\eta_{0})\right]=0$
holds. 

Neyman orthogonality: The pathwise derivative of (4) in the direction
$\eta-\eta_{0}=(\pi-\pi_{0},\beta-\beta_{\tau},\mu-\mu_{\tau})$ is{\small{}
\begin{align*}
 & \partial_{\pi}E\left[\psi(w_{i},\theta_{0},\eta_{0})\right][\pi-\pi_{0}]\\
 & =E\left[\left\{ t_{i}I\left(h_{i}>0\right)\dfrac{1-\tau}{\pi_{i}^{2}}(d_{i}-x_{i}^{\prime}\mu_{\tau})-t_{i}I\left(h_{i}>0\right)\dfrac{1-\tau}{\pi_{i}^{2}}(d_{i}-x_{i}^{\prime}\mu_{\tau})\right\} \left(\pi_{i}-\pi_{0i}\right)\right]=0
\end{align*}
\begin{align*}
 & \partial_{\beta}E\left[\psi(w_{i},\theta_{0},\eta_{0})\right][\beta-\beta_{\tau0}]\\
 & =\partial_{r}E\left[I\left(h_{\tau i}>0\right)\left(t_{i}\left\{ h_{\tau i}-P(y_{i}-d_{i}\theta_{\tau}-x_{i}'\beta_{\tau}-r_{\tau i}\leq x_{i}^{\prime}(\tilde{\beta}-\beta_{\tau})|w_{i})\right\} +(t_{i}-\pi_{i})\dfrac{(1-\tau)}{\pi_{i}}\right)(d_{i}-x_{i}^{\prime}\mu_{\tau})\right]\Big|_{r=0}\\
 & =-E\left[I\left(h_{\tau i}>0\right)t_{i}f(x_{i}^{\prime}(\tilde{\beta}-\beta_{\tau})|w_{i})(d_{i}-x_{i}^{\prime}\mu_{\tau})x_{i}^{\prime}\left(\beta-\beta_{\tau}\right)\right]\Big|_{r=0}\\
 & =-E\left[t_{i}I\left(h_{i}>0\right)f_{i}(d_{i}-x_{i}^{\prime}\mu_{\tau})x_{i}^{\prime}\left(\beta-\beta_{\tau}\right)\right]=0\;(\because\text{Definition of \ensuremath{\mu_{\tau}}})\\
 & =0
\end{align*}
\begin{align*}
 & \partial_{\mu}E\left[\psi(w_{i},\theta_{0},\eta_{0})\right][\mu-\mu_{\tau0}]\\
 & =-E\left[t_{i}I\left(h_{i}>0\right)\left(h_{i}-I\left\{ y_{i}\leq d_{i}\theta_{\tau}+x_{i}^{\prime}\beta_{\tau}+r_{\tau i}\right\} \right)x_{i}^{\prime}\left(\mu-\mu_{\tau}\right)\right]-E\left[(t_{i}-\pi_{i})I\left(h_{i}>0\right)\dfrac{(1-\tau)}{\pi_{i}}x_{i}^{\prime}\left(\mu-\mu_{\tau}\right)\right]\\
 & =-E\left[t_{i}I\left(h_{i}>0\right)\left(h_{i}-F_{y}\left\{ d_{i}\theta_{\tau}+x_{i}^{\prime}\beta_{\tau}+r_{\tau i}|w_{i},t_{i}=1,\pi_{i}>1-\tau\right\} \right)x_{i}^{\prime}\tilde{\mu}\right]\\
 & \;\;-E\left[E[t_{i}-\pi_{i}|w_{i}]\cdot I\left(h_{i}>0\right)\dfrac{(1-\tau)}{\pi_{i}}x_{i}^{\prime}\tilde{\mu}\right]=0.
\end{align*}
}{\small\par}

\subsection*{A.2 Assumption IQR}

I suppress $\tau$ notation in the parameters for the sake of simplicity.

\subsubsection*{A.2.1 Assumption}

For $k\in[K]$, I assume that
\[
L_{n}(\check{\theta}_{k})\leq\min_{\theta\in\Theta_{\tau}}L_{n}(\theta)+\delta_{n}n^{-1}.
\]
I suppress $k$ notation in the parameters. We define
\begin{align*}
\Gamma(\tilde{\theta},\tilde{\eta}) & =\bar{E}\left[\psi_{\theta,\eta}(y_{i},d_{i},x_{i})\right]|_{\theta=\tilde{\theta},h=\tilde{\eta}}\\
\Gamma_{\eta}(\tilde{\theta},\tilde{\eta})[\hat{\eta}-\eta_{0}] & =\lim_{t\to0}\dfrac{\Gamma(\tilde{\theta},\tilde{\eta}+t[\hat{\eta}-\eta_{0}])-\Gamma(\tilde{\theta},\tilde{\eta})}{t}
\end{align*}
where $\Gamma_{\eta}$ is the directional derivative with respect
to $[\hat{\eta}-\eta_{0}]$ at $(\tilde{\theta},\tilde{\eta})$.

We denote the nuisance parameters as $\eta=(g,v,\pi)$ where $v_{i}=d_{i}-x_{i}^{\prime}\mu$.
We define the score with $(\theta,\eta)$ as {\small{}
\begin{align}
\psi_{\theta,\eta}(w_{i}) & =I\left(h_{i}>0\right)\left(t_{i}\left\{ h_{i}-I(y_{i}-d_{i}\theta-g_{i}\leq0)\right\} +(t_{i}-\pi_{i})\dfrac{(1-\tau)}{\pi_{i}}\right)v_{i}.
\end{align}
}For some sequences $\delta_{n}\to0$ and $\Delta_{n}\to0$, we let
$\overline{\mathcal{F}}$ denote a set of functions such that each
element $\tilde{\eta}=(\tilde{g},\tilde{v},\tilde{\pi})\in\overline{\mathcal{F}}$
satisfies
\begin{align}
\bar{E}\left[(\tilde{g}_{i}-g_{i})^{2}\right] & \lesssim\delta_{n}n^{-1/2}\\
\left|\bar{E}[f_{i}v_{i}\{\tilde{g}_{i}-g_{i}\}]\right| & \leq\delta_{n}n^{-1/2}\\
\bar{E}[(\tilde{v}_{i}-v_{i})^{2}] & \leq\delta_{n}\\
\bar{E}[\left|\tilde{g}_{i}-g_{i}\right|\left|\tilde{v}_{i}-v_{i}\right|] & \leq\delta_{n}n^{-1/2}\\
\bar{E}\left[\left|v_{i}\right|(\tilde{g}_{i}-g_{i})^{2}\right] & \leq\delta_{n}n^{-1/2}\\
\bar{E}\left[\left|\tilde{v}_{i}-v_{i}\right|(\tilde{g}_{i}-g_{i})^{2}\right] & \leq\delta_{n}n^{-1/2}
\end{align}
and with probability $1-\Delta_{n}$ we have
\begin{equation}
\sup_{|\theta-\theta_{0}|\leq\delta_{n}^{2},\tilde{\eta}\in\mathcal{\overline{\mathcal{F}}}}\left|(E_{n}-\bar{E})[\psi_{\theta,\tilde{\eta}}-\psi_{\theta,\eta_{0}}]\right|\leq\delta_{n}n^{-1/2}.
\end{equation}
We assume the estimated functions $\hat{\eta}=(\hat{g},\hat{v},\hat{\pi})$
satisfy the following condition.

\medskip

\textbf{Condition IQR. }Let $\{(y_{i},d_{i},z_{i})\}$ be random variables
independent across $i$ satisfying (1) and (5). Suppose there are
positive constants $0<c\leq C<\infty$ such that: 

\hspace{-5mm}(1) $f_{y_{i}|d_{i},z_{i}}(y|d_{i},z_{i})\leq\bar{f}$,
$f_{y_{i}|d_{i},z_{i}}^{\prime}(y|d_{i},z_{i})\leq\bar{f^{\prime}}$;
$\bar{E}[v_{i}^{4}]+\bar{E}[d_{i}^{4}]\leq C$; $c\leq|\bar{E}[t_{i}I\left(h_{i}>0\right)f_{i}d_{i}v_{i}]|$

\hspace{-5mm}(2) $\left\{ \theta:|\theta-\theta_{0}|\leq n^{-1/2}/\delta_{n}\right\} \subset\Theta_{\tau}$
where $\Theta_{\tau}$ is a (possibly random) compact interval.

\hspace{-5mm}(3) with probability at least $1-\Delta_{n}$ the estimated
functions $\hat{\eta}=(\hat{g},\hat{v},\hat{\pi})\in\overline{\mathcal{F}}$
and
\[
|\check{\theta}-\theta_{0}|\leq\delta_{n}\;\text{and }E_{n}[\psi_{\check{\theta},\hat{\eta}}(y_{i},d_{i},z_{i})]\leq\delta_{n}n^{-1/2}.
\]

\hspace{-5mm}(4) with probability at least $1-\Delta_{n}$ the estimated
functions $\hat{\eta}=(\hat{g},\hat{v},\hat{\pi})$ satisfy
\begin{align}
\left\Vert \hat{v}_{i}-v_{i}\right\Vert _{2,n} & \leq\delta_{n}\\
\left\Vert \hat{\pi}_{i}-\pi_{i}\right\Vert _{2,n} & \leq\delta_{n}\\
\left\Vert I\left(\hat{h}_{i}>0\right)-I\left(h_{i}>0\right)\right\Vert _{2,n} & \leq\delta_{n}\\
\left\Vert I\left(\left|\epsilon_{i}\right|\leq\left|d_{i}(\theta-\check{\theta})+g_{i}-\hat{g}_{i}\right|\right)\right\Vert _{2,n} & \leq\delta_{n}^{2}.
\end{align}

\subsubsection*{A.2.2 Proof}

\textbf{Condition IQR (1)} requires conditions on the probability
density function that are assumed in Condition AS (3). $\bar{E}[v_{i}^{4}]+\bar{E}[d_{i}^{4}]\leq C$
is implied by Condition M (1) using $\xi_{1}=(1,0^{\prime})^{\prime}$
and $\xi_{2}=(1,-\mu_{\tau}^{\prime})^{\prime}$ since $\bar{E}[d_{i}^{4}]=\bar{E}[\{(d_{i},x_{i}^{\prime})\xi_{1}\}^{4}]\leq C$
and $\bar{E}[v_{i}^{4}]\leq\bar{E}[\{(d_{i},x_{i}^{\prime})\xi_{2}\}^{4}]\leq C(1+\left\Vert \mu_{\tau}\right\Vert )^{4}$
and $\left\Vert \mu_{\tau}\right\Vert \leq C$ as assumed in Condition
AS (1). $c\leq|\bar{E}[t_{i}I\left(h_{i}>0\right)f_{i}d_{i}v_{i}]|$
is implied by $\bar{E}[t_{i}I\left(h_{i}>0\right)f_{i}d_{i}v_{i}]=\bar{E}[t_{i}I\left(h_{i}>0\right)f_{i}v_{i}^{2}]\geq\underline{f}$.

\textbf{Condition IQR (3)}: I first verify $\hat{\eta}=(\hat{g},\hat{v},\hat{\pi})\in\overline{\mathcal{F}}$.
\citet{belloni2019valid} show $\bar{E}\left[\left\{ x_{i}^{\prime}(\tilde{\beta}-\beta)\right\} ^{2}\right]\lesssim_{P}\sqrt{\dfrac{s\log(pn)}{n}}$
and $\bar{E}[\left\{ x_{i}^{\prime}(\tilde{\mu}-\mu)\right\} ^{2}]\lesssim\dfrac{1}{h^{2}}\dfrac{s\log(pn)}{n}+h^{2\bar{k}}+\dfrac{\lambda^{2}}{n^{2}}s$where
$h$ is the bandwidth for estimating $\hat{f}_{i}$. I verify equation
(9) to (14) under conditions.

\[
\bar{E}\left[(\tilde{g}_{i}-g_{i})^{2}\right]\leq\bar{E}\left[\left\{ x_{i}^{\prime}(\tilde{\beta}-\beta)\right\} ^{2}\right]+\bar{E}\left[r_{i}^{2}\right]\lesssim\dfrac{s\log(pn)}{n}\lesssim\delta_{n}n^{-1/2}\;(\because\text{Condition AS (2)})
\]
\begin{align*}
\left|\bar{E}[t_{i}I\left(h_{i}>0\right)f_{i}v_{i}(\tilde{g}_{i}-g_{i})]\right| & \leq\left|\bar{E}[t_{i}I\left(h_{i}>0\right)f_{i}v_{i}x_{i}^{\prime}(\tilde{\beta}-\beta)]\right|+\left|\bar{E}[t_{i}I\left(h_{i}>0\right)f_{i}v_{i}r_{i}]\right|\\
 & =\left|\bar{E}[t_{i}I\left(h_{i}>0\right)\sqrt{f}_{i}u_{i}x_{i}^{\prime}(\tilde{\beta}-\beta)]\right|+\left|\bar{E}[t_{i}I\left(h_{i}>0\right)f_{i}v_{i}r_{i}]\right|\\
 & =\left|\bar{E}[t_{i}I\left(h_{i}>0\right)f_{i}v_{i}r_{i}]\right|\;(\because\bar{E}[t_{i}I(h_{\tau i}>0)\sqrt{f_{i}}x_{i}u_{i}]=0)\\
 & \lesssim\delta_{n}n^{-1/2}\;(\because\text{Condition M (2)})
\end{align*}
\[
\bar{E}[(\tilde{v}_{i}-v_{i})^{2}]=\bar{E}[\left\{ x_{i}^{\prime}(\tilde{\mu}-\mu)\right\} ^{2}]\lesssim\dfrac{1}{h^{2}}\dfrac{s\log(pn)}{n}+h^{2\bar{k}}+\dfrac{\lambda^{2}}{n^{2}}s\lesssim\delta_{n}
\]
under $h^{-2}s\log(pn)\leq\delta_{n}n$, $h\leq\delta_{n}$, and $\lambda^{2}s\leq\delta_{n}n^{2}$.

\begin{align*}
\bar{E}[\left|\tilde{g}_{i}-g_{i}\right|\left|\tilde{v}_{i}-v_{i}\right|] & =\bar{E}[\left|\tilde{g}_{i}-g_{i}\right|\left|x_{i}^{\prime}(\tilde{\mu}-\mu)\right|]\\
 & \leq\bar{E}[(\tilde{g}_{i}-g_{i})^{2}]^{1/2}\bar{E}[\left\{ x_{i}^{\prime}(\tilde{\mu}-\mu)\right\} ^{2}]^{1/2}\\
 & \lesssim\left\{ s\log(pn)/n\right\} ^{1/2}\left\{ \dfrac{1}{h}\sqrt{\dfrac{s\log(pn)}{n}}+h^{\bar{k}}+\lambda\dfrac{\sqrt{s}}{n}\right\} \lesssim\delta_{n}n^{-1/2}
\end{align*}
under $h^{-2}s^{2}\log^{2}(pn)\leq\delta_{n}^{2}n$ and $h^{\bar{k}}\sqrt{s\log(pn)}\leq\delta_{n}$.

\begin{align*}
\bar{E}\left[\left|v_{i}\right|(\tilde{g}_{i}-g_{i})^{2}\right] & =\bar{E}\left[\left|v_{i}\right|\left\{ x_{i}^{\prime}(\tilde{\beta}-\beta)\right\} ^{2}\right]+\bar{E}\left[\left|v_{i}\right|r_{i}^{2}\right]\\
 & \leq\left(\bar{E}\left[v_{i}^{2}\left\{ x_{i}^{\prime}(\tilde{\beta}-\beta)\right\} ^{2}\right]\bar{E}\left[\left\{ x_{i}^{\prime}(\tilde{\beta}-\beta)\right\} ^{2}\right]\right)^{1/2}+\left(\bar{E}\left[v_{i}^{2}r_{i}^{2}\right]\bar{E}\left[r_{i}^{2}\right]\right)^{1/2}\\
 & \lesssim\left\{ s\log(pn)/n\right\} ^{1/2}\left\{ \bar{E}\left[v_{i}^{2}\left\{ x_{i}^{\prime}(\tilde{\beta}-\beta)\right\} ^{2}\right]+\bar{E}\left[v_{i}^{2}r_{i}^{2}\right]\right\} ^{1/2}\\
 & \lesssim s\log(pn)/n\lesssim\delta_{n}n^{-1/2}
\end{align*}
since $\left\Vert (1,-\mu)\right\Vert \leq C$, $\bar{E}[\{(d_{i},x_{i}^{\prime})\xi\}^{4}]\leq C\left\Vert \xi\right\Vert ^{4}$,
and $\bar{E}\left[\left(x_{i}^{\prime}\xi\right)^{2}r_{\tau i}^{2}\right]\leq C\left\Vert \xi\right\Vert ^{2}\bar{E}\left[r_{\tau i}^{2}\right]$.
Lastly,

\begin{align*}
\bar{E}\left[\left|\tilde{v}_{i}-v_{i}\right|(\tilde{g}_{i}-g_{i})^{2}\right] & =\bar{E}\left[\left|x_{i}^{\prime}(\tilde{\mu}-\mu)\right|(\tilde{g}_{i}-g_{i})^{2}\right]\\
 & =\bar{E}\left[\left|x_{i}^{\prime}(\tilde{\mu}-\mu)\right|(\tilde{g}_{i}-g_{i})^{2}\right]\\
 & \lesssim s\log(pn)/n\lesssim\delta_{n}n^{-1/2}.
\end{align*}

Next, I verify equation (20). Let $\mathcal{F}_{1,\eta}=\left\{ \psi_{\theta,\eta}:\left|\theta-\theta_{0}\right|\leq\delta_{n}^{2}\right\} $
and $\mathcal{F}_{2}=\left\{ \psi_{\theta,\tilde{\eta}}-\psi_{\theta,\eta_{0}}:\left|\theta-\theta_{0}\right|\leq\delta_{n}^{2}\right\} $.
Note that $\mathcal{F}_{2}\subset\mathcal{F}_{1,\tilde{\eta}}-\mathcal{F}_{1,\eta}$.
Since $\mathcal{F}_{1,\eta}$ is the produce of a VC class of dimension
1 with the random variable $v$, its entropy nuber satisfies $\log N\left(\epsilon\left\Vert F_{1,\eta}\right\Vert _{Q,2},\mathcal{F}_{1,\eta},\left\Vert \cdot\right\Vert _{Q,2}\right)\leq v\log(a/\epsilon)$
where $F_{1,\eta}$ is a measurable envelope for $\mathcal{F}_{1,\eta}$
that satisfies $\left\Vert F_{1,\eta}\right\Vert _{P,q}\leq c_{1}$.
Therefore, $\left\Vert F_{2}\right\Vert _{P,q}=\left\Vert F_{1,\eta}\right\Vert _{P,q}+\left\Vert F_{1,\tilde{\eta}}\right\Vert _{P,q}\leq2c_{1}$
by the triangle inequality and $\log N\left(\epsilon\left\Vert F_{2}\right\Vert _{Q,2},\mathcal{F}_{2},\left\Vert \cdot\right\Vert _{Q,2}\right)\leq2v\log(2a/\epsilon)$
by the proof of Theorem 3 in \citet{andrews1994empirical}. To bound
$\sup_{f\in\mathcal{F}_{2}}\left|\mathbb{G}_{n}(f)\right|$, I apply
Theorem 5.1 of \citet{chernozhukov2014gaussian} conditional on $(w_{i})_{i\in I^{c}}$
so that $\tilde{\eta}$ can be treated as fixed. By the proof of Theorem
2 in Section A.5, note that for $f_{2}\in\mathcal{F}_{2}$,
\begin{align*}
 & \left\Vert f_{2}\right\Vert _{P,2}^{2}\\
 & =\bar{E}\left[\left\{ \psi(w_{i},\theta,\tilde{\eta})-\psi(w_{i},\theta,\eta_{0})\right\} ^{2}\right]\\
 & \lesssim\bar{E}\left[(\hat{v}_{i}-v_{i})^{2}\right]+\bar{E}\left[(\hat{\pi}_{i}-\pi_{i})^{2}\right]+\bar{E}\left[\left(I\left(\hat{h}_{i}>0\right)-I\left(h_{i}>0\right)\right)^{2}\right]+\bar{E}\left[v_{i}^{2}I\left(\left|\epsilon_{i}\right|\leq\left|d_{i}(\theta-\tilde{\theta})+g_{i}-\tilde{g}_{i}\right|\right)^{2}\right]\\
 & \lesssim\dfrac{1}{h^{2}}\dfrac{s\log(pn)}{n}+h^{2\bar{k}}+\dfrac{\lambda^{2}}{n^{2}}s.
\end{align*}
Applying the theorem with with $\sigma=C\left\{ \dfrac{1}{h}\sqrt{\dfrac{s\log(pn)}{n}}+h^{\bar{k}}+\dfrac{\lambda}{n}\sqrt{s}\right\} $,$a=pn$,
$v=s$, $\left\Vert M\right\Vert _{P,2}\leq C$ yields
\begin{align*}
\sup_{|\theta-\theta_{0}|\leq\delta_{n}^{2},\tilde{\eta}\in\mathcal{\overline{\mathcal{F}}}}\left|(E_{n}-\bar{E})[\psi_{\theta,\tilde{\eta}}-\psi_{\theta,\eta_{0}}]\right| & \lesssim\dfrac{1}{\sqrt{n}}\left\{ \left(\dfrac{1}{h}\sqrt{\dfrac{s\log(pn)}{n}}+h^{\bar{k}}+\dfrac{\lambda}{n}\sqrt{s}\right)\sqrt{s\log(pn)}+\dfrac{1}{\sqrt{n}}sC\log(pn)\right\} \\
 & \lesssim\delta_{n}n^{-1/2}.
\end{align*}

\textbf{Condition IQR (4)}: I verify equations (16) to (19).

\[
\left\Vert \hat{v}_{i}-v_{i}\right\Vert _{2,n}\lesssim_{P}\bar{E}\left[(\hat{v}_{i}-v_{i})^{2}\right]^{1/2}\lesssim\delta_{n}^{1/2}
\]
\[
\left\Vert \hat{\pi}_{i}-\pi_{i}\right\Vert _{2,n}\lesssim_{P}\bar{E}\left[(\hat{\pi}_{i}-\pi_{i})^{2}\right]^{1/2}\lesssim\sqrt{\dfrac{s\ln(pn)}{n}}=\delta_{n}
\]
\begin{align*}
\left\Vert I\left(\hat{h}_{i}>0\right)-I\left(h_{i}>0\right)\right\Vert _{2,n} & =\left\Vert I\left(\hat{\pi}_{i}>1-\tau\right)-I\left(\pi_{i}>1-\tau\right)\right\Vert _{2,n}\\
 & =\sqrt{\dfrac{1}{n}\sum_{i=1}^{n}\left\{ I\left(\hat{\pi}_{i}>1-\tau\right)-I\left(\pi_{i}>1-\tau\right)\right\} ^{2}}
\end{align*}
Note that 
\begin{align*}
\left|I\left(\hat{\pi}_{i}>1-\tau\right)-I\left(\pi_{i}>1-\tau\right)\right| & =\left|I\left(\pi_{i}-\varepsilon_{i}>1-\tau\right)-I\left(\pi_{i}>1-\tau\right)\right|\;(\because\varepsilon_{i}=\hat{\pi}_{i}-\pi_{i})\\
 & =\left|I(\iota_{i}>\varepsilon_{i})-I(\iota_{i}>0)\right|\;(\because\iota_{i}=\pi_{i}-(1-\tau))\\
 & \leq I(\left|\iota_{i}\right|<\left|\varepsilon_{i}\right|).
\end{align*}
Therefore,
\begin{align*}
\left\Vert I\left(\hat{\pi}_{i}>1-\tau\right)-I\left(\pi_{i}>1-\tau\right)\right\Vert _{2,n}^{2} & \leq\dfrac{1}{n}\sum_{i=1}^{n}I(\left|\iota_{i}\right|<\left|\hat{\pi}_{i}-\pi_{i}\right|)\\
 & \lesssim_{P}\bar{E}\left[I(\left|\iota_{i}\right|<\left|\hat{\pi}_{i}-\pi_{i}\right|)\right]\\
 & \leq\bar{f}_{\iota}\dfrac{1}{n}\sum_{i=1}^{n}\left|\hat{\pi}_{i}-\pi_{i}\right|\lesssim\sqrt{\dfrac{1}{n}\sum_{i=1}^{n}\left(\hat{\pi}_{i}-\pi_{i}\right)^{2}}\lesssim_{P}\sqrt{\dfrac{s\ln(pn)}{n}}=\delta_{n}.
\end{align*}
\begin{align*}
\left\Vert I\left(\left|\epsilon_{i}\right|\leq\left|d_{i}(\theta-\check{\theta})+g_{i}-\tilde{g}_{i}\right|\right)\right\Vert _{2,n}^{2} & =E_{n}\left[I\left(\left|\epsilon_{i}\right|\leq\left|d_{i}(\theta-\tilde{\theta})\right|\right)\right]+E_{n}\left[I\left(\left|\epsilon_{i}\right|\leq\left|x_{i}^{\prime}(\tilde{\beta}-\beta)\right|\right)\right]+E_{n}\left[I\left(\left|\epsilon_{i}\right|\leq\left|r_{i}\right|\right)\right]\\
 & \lesssim_{P}\bar{E}\left[I\left(\left|\epsilon_{i}\right|\leq\left|d_{i}(\theta-\tilde{\theta})\right|\right)\right]+\bar{E}\left[I\left(\left|\epsilon_{i}\right|\leq\left|x_{i}^{\prime}(\tilde{\beta}-\beta)\right|\right)\right]+\bar{E}\left[\left|\epsilon_{i}\right|\leq\left|r_{i}\right|\right]\\
 & \leq\bar{f}\left\{ \dfrac{1}{n}\sum_{i=1}^{n}\left|d_{i}(\theta-\tilde{\theta})\right|+\dfrac{1}{n}\sum_{i=1}^{n}\left|x_{i}^{\prime}(\tilde{\beta}-\beta)\right|+\dfrac{1}{n}\sum_{i=1}^{n}\left|r_{i}\right|\right\} \\
 & \leq\bar{f}\left\{ \sqrt{\dfrac{1}{n}\sum_{i=1}^{n}\left|d_{i}(\theta-\tilde{\theta})\right|^{2}}+\sqrt{\dfrac{1}{n}\sum_{i=1}^{n}\left|x_{i}^{\prime}(\tilde{\beta}-\beta)\right|^{2}}+\sqrt{\dfrac{1}{n}\sum_{i=1}^{n}\left|r_{i}\right|^{2}}\right\} \\
 & \lesssim_{P}\bar{f}\left\{ \left|\theta-\tilde{\theta}\right|\sqrt{\bar{E}[d_{i}^{2}]}+\sqrt{\bar{E}\left[\left|x_{i}^{\prime}(\tilde{\beta}-\beta)\right|^{2}\right]}+\sqrt{\bar{E}\left[\left|r_{i}\right|^{2}\right]}\right\} \\
 & \lesssim\bar{f}\sqrt{\dfrac{s\ln(pn)}{n}}=\delta_{n}
\end{align*}

\subsection*{A.3 Proof of Theorem 1 (DML 1 case)}

\subsubsection*{A.3.1 Lemma}
\begin{lem}
Under the condition IQR, we have for all $k\in[K]$,
\[
\bar{\sigma}_{n}^{-1}\sqrt{n}(\check{\theta}_{k}-\theta)=\bar{E}_{n}\left[\psi^{2}(w_{i},\theta_{0},\eta_{0})\right]^{-1/2}\dfrac{1}{\sqrt{n}}\sum_{i\in I_{k}}\psi(w_{i},\theta_{0},\eta_{0})\overset{d}{\to}N(0,1)
\]
where $\bar{\sigma}_{n}^{2}=\left(\bar{E}_{n}[t_{i}I(h_{i}>0)f_{i}d_{i}v_{i}]\right)^{-1}\bar{E}_{n}\left[\psi^{2}(w_{i},\theta_{0},\eta_{0})\right]\left(\bar{E}_{n}[t_{i}I(h_{i}>0)f_{i}d_{i}v_{i}]\right)^{-1}$. 
\end{lem}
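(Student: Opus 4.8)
\emph{Proof strategy.} The plan is to convert the approximate first-order condition supplied by Condition IQR(3) --- that $\sqrt n\,E_{n,k}[\psi_{\check\theta_k,\hat\eta_k}]=o_p(1)$ on an event of probability at least $1-\Delta_n$ --- into a linear (Bahadur-type) representation for $\check\theta_k$ and then apply a central limit theorem. Writing the empirical process over the $k$th fold as $\mathbb{G}_{n,k}(f)=n^{-1/2}\sum_{i\in I_k}(f(w_i)-E[f(w_i)])$ and recalling $\bar E[\psi_{\theta_0,\eta_0}]=0$ from Lemma 1, I would start from the exact decomposition
\[
\sqrt n\,E_{n,k}[\psi_{\check\theta_k,\hat\eta_k}]
=\mathbb{G}_{n,k}(\psi_{\theta_0,\eta_0})
+\mathbb{G}_{n,k}\bigl(\psi_{\check\theta_k,\hat\eta_k}-\psi_{\theta_0,\eta_0}\bigr)
+\sqrt n\,\bar E[\psi_{\check\theta_k,\hat\eta_k}].
\]
The aim is to show that the second and third terms are $o_p(1)$, leaving $\mathbb{G}_{n,k}(\psi_{\theta_0,\eta_0})$ plus a Jacobian term in $\check\theta_k-\theta_0$.

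For the second (empirical-process) term I would exploit the cross-fitting structure: because $\hat\eta_k$ depends only on the auxiliary sample $I_k^c$, it is independent of $(w_i)_{i\in I_k}$ and may be treated as fixed conditional on $I_k^c$. Together with $|\check\theta_k-\theta_0|\le\delta_n$, the stochastic equicontinuity bound (20) --- established in Section A.2.2 by applying the maximal inequality of \citet{chernozhukov2014gaussian} to the VC-type class $\mathcal{F}_2$, with the variance proxy driven by the $\|\cdot\|_{2,n}$ bounds on $\hat v_k-v$, $\hat\pi_k-\pi$, $I(\hat h_k>0)-I(h_k>0)$ and the indicator increment $I(|\epsilon_i|\le|d_i(\theta-\check\theta_k)+g_i-\hat g_{k,i}|)$ collected in Condition IQR(4) --- yields $\mathbb{G}_{n,k}(\psi_{\check\theta_k,\hat\eta_k}-\psi_{\theta_0,\eta_0})=o_p(1)$.

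For the third (drift) term I would expand the population moment $\theta,\eta\mapsto\bar E[\psi_{\theta,\eta}]$, which is smooth even though $\psi$ is not, because the discontinuous indicator is integrated against the conditional density $f_{\epsilon_i}$. A second-order expansion about $(\theta_0,\eta_0)$ gives $\bar E[\psi_{\check\theta_k,\hat\eta_k}]=\partial_\theta\bar E[\psi_{\theta,\eta_0}]\big|_{\theta_0}(\check\theta_k-\theta_0)+\partial_\eta\bar E[\psi_{\theta_0,\eta}]\big|_{\eta_0}[\hat\eta_k-\eta_0]+R_n$. The first-order nuisance term is annihilated by Neyman orthogonality (Lemma 1), and the quadratic remainder $R_n$ is bounded by the product-rate conditions (9)--(14) defining $\overline{\mathcal{F}}$, which are designed exactly so that $\sqrt n\,R_n=o_p(1)$, with the cross term mixing $\hat g_k-g$ and $\hat v_k-v$ controlled by the bound on $\bar E[|\tilde g_i-g_i||\tilde v_i-v_i|]$. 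Identifying the Jacobian $J_n=\partial_\theta\bar E[\psi_{\theta,\eta_0}]\big|_{\theta_0}=-\bar E_n[t_iI(h_i>0)f_id_iv_i]$, which equals $-\bar E_n[t_iI(h_i>0)f_iv_i^2]$ and is bounded away from zero by Condition IQR(1), and collecting the three terms gives $n^{-1/2}\sum_{i\in I_k}\psi(w_i,\theta_0,\eta_0)+J_n\sqrt n(\check\theta_k-\theta_0)=o_p(1)$.

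Finally I would invert $J_n$ to get $\sqrt n(\check\theta_k-\theta_0)=(\bar E_n[t_iI(h_i>0)f_id_iv_i])^{-1}n^{-1/2}\sum_{i\in I_k}\psi(w_i,\theta_0,\eta_0)+o_p(1)$, premultiply by $\bar\sigma_n^{-1}$ so that the two Jacobian factors cancel against the sandwich and leave $\bar E_n[\psi^2]^{-1/2}n^{-1/2}\sum_{i\in I_k}\psi$, and close with a Lyapunov CLT for independent, non-identically-distributed triangular arrays, verifying the Lyapunov ratio vanishes from the bounded fourth moments in Condition IQR(1) and that $\bar E_n[\psi^2]$ stays bounded away from zero. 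The main obstacle is the nonsmoothness of $\psi$ in $(\theta,\eta)$: a direct pathwise Taylor expansion of the empirical objective is unavailable, so the argument must separate a smooth population expansion from an empirical-process fluctuation, and the delicate point is showing that the second-order remainder of the population expansion --- especially the cross terms mixing the quantile-regression error $\hat g_k-g$ with the projection error $\hat v_k-v$ --- is genuinely $o(n^{-1/2})$ under the stated rate conditions rather than merely $o(1)$.
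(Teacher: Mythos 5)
Your proposal matches the paper's proof essentially step for step: the same decomposition of $E_{n,k}[\psi_{\check{\theta}_k,\hat{\eta}_k}]$ into the oracle score, an empirical-process fluctuation (made tractable by cross-fitting, so $\hat{\eta}_k$ is fixed conditional on $I_k^c$), and a population drift expanded to second order with the first-order nuisance direction controlled by the (approximate) orthogonality bound $|\bar{E}[f_i v_i(\hat{g}_i-g_i)]|\leq\delta_n n^{-1/2}$, followed by inversion of the Jacobian $-\bar{E}_n[t_i I(h_i>0)f_i d_i v_i]$ and a Lyapunov CLT for the i.n.i.d.\ array. The only substantive difference is cosmetic: the paper splits your single term $\mathbb{G}_{n,k}(\psi_{\check{\theta}_k,\hat{\eta}_k}-\psi_{\theta_0,\eta_0})$ into $\mathbb{G}_n(\psi_{\check{\theta},\hat{\eta}}-\psi_{\check{\theta},\eta_0})$, which the sup-equicontinuity condition over $\overline{\mathcal{F}}$ covers, plus $\mathbb{G}_n(\psi_{\check{\theta},\eta_0}-\psi_{\theta_0,\eta_0})$, bounded separately by a VC-subgraph maximal inequality with envelope $I\left(|\epsilon_i|\leq\delta_n|d_i|\right)|v_i|$ yielding the rate $\delta_n^{1/2}n^{-1/2}$ --- a routine step that your citation of the equicontinuity bound alone does not formally cover, since that bound holds the $\theta$-argument fixed across the two scores.
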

\begin{proof}
In the proof, I use $(\check{\theta},\hat{h})$ instead of $(\check{\theta}_{k},\hat{h}_{k})$.

\hspace{-5mm}\textbf{Step 1. Normality result}

We have the following identity
\begin{align*}
E_{n}[\psi_{\check{\theta},\hat{\eta}}] & =E_{n}[\psi_{\theta_{0},\eta_{0}}]+E_{n}[\psi_{\check{\theta},\hat{\eta}}-\psi_{\theta_{0},\eta_{0}}]\\
 & =E_{n}[\psi_{\theta_{0},\eta_{0}}]+\underbrace{\bar{E}[\psi_{\check{\theta},\hat{\eta}}]}_{(1)}+\underbrace{n^{-1/2}\mathbb{G}_{n}(\psi_{\check{\theta},\hat{\eta}}-\psi_{\check{\theta},\eta_{0}})}_{(2)}+\underbrace{n^{-1/2}\mathbb{G}_{n}(\psi_{\check{\theta},\eta_{0}}-\psi_{\theta_{0},\eta_{0}})}_{(3)}.
\end{align*}
By the second relation in IQR (3), the left hand side satisfies $\left|E_{n}[\psi_{\check{\theta},\hat{\eta}}]\right|\lesssim\delta_{n}n^{-1/2}$
with probability at least $1-\Delta_{n}$. By step 2, $(1)=-\bar{E}\left[t_{i}I\left(h_{i}>0\right)f_{i}d_{i}v_{i}\right](\hat{\theta}-\theta_{0})+O(\delta_{n}n^{-1/2}+\delta_{n}\left|\hat{\theta}-\theta_{0}\right|)$.
Since $\hat{\eta}=(\hat{g},\hat{v},\hat{\pi})\in\overline{\mathcal{F}}$
with probability at least $1-\Delta_{n}$ by IQR (3), equation (15)
holds. By condition IQR (3) we have with probability at least $1-\Delta_{n}$
that $|\check{\theta}-\theta_{0}|\leq\delta_{n}$. Therefore, we can
apply equation (15) and we have $\left|(2)\right|\lesssim\delta_{n}n^{-1/2}$.
By condition IQR (3) we have with probability at least $1-\Delta_{n}$
that $|\check{\theta}-\theta_{0}|\leq\delta_{n}$. Observe that
\begin{align*}
(\psi_{\theta,\eta_{0}}-\psi_{\theta_{0},\eta_{0}})(y_{i},d_{i},z_{i}) & =t_{i}I\left(h_{i}>0\right)\left(I(y_{i}-d_{i}\theta_{0}-g_{i}\leq0)-I(y_{i}-d_{i}\theta-g_{i}\leq0)\right)v_{i}\\
 & =t_{i}I\left(h_{i}>0\right)\left(I(\epsilon_{i}\leq0)-I(\epsilon_{i}\leq d_{i}(\theta-\theta_{0}))\right)v_{i}
\end{align*}
so that $\left|(\psi_{\theta,\eta_{0}}-\psi_{\theta_{0},\eta_{0}})(y_{i},d_{i},z_{i})\right|\leq I\left\{ \left|\epsilon_{i}\right|\leq\delta_{n}\left|d_{i}\right|\right\} \left|v_{i}\right|$
whenever $|\theta-\theta_{0}|\leq\delta_{n}$. Since the class of
functions 

\hspace{-5mm}$\left\{ (y,d,z)\mapsto(\psi_{\theta,\eta_{0}}-\psi_{\theta_{0},\eta_{0}})(y,d,z):|\theta-\theta_{0}|\leq\delta_{n}\right\} $
is a VC subgraph class, we have
\[
\sup_{|\theta-\theta_{0}|\leq\delta_{n}}\left|\mathbb{G}_{n}(\psi_{\theta,\eta_{0}}-\psi_{\theta_{0},\eta_{0}})\right|\lesssim\left(\bar{E}\left[I\left\{ \left|\epsilon_{i}\right|\leq\delta_{n}\left|d_{i}\right|\right\} v_{i}^{2}\right]\right)^{1/2}\lesssim\delta_{n}^{1/2}.
\]
This implies that $\left|(3)\right|\lesssim\delta_{n}^{1/2}n^{-1/2}$
with probability $1-o(1)$. 

Combining these relations we have
\begin{align*}
\bar{E}\left[t_{i}I\left(h_{i}>0\right)f_{i}d_{i}v_{i}\right](\hat{\theta}-\theta_{0}) & =E_{n}[\psi_{\theta_{0},h_{0}}]+O_{p}(\delta_{n}n^{-1/2})+O_{p}(\delta_{n})\left|\hat{\theta}-\theta_{0}\right|\\
\sqrt{n}(\hat{\theta}-\theta_{0})\left\{ \bar{E}\left[t_{i}I\left(h_{i}>0\right)f_{i}d_{i}v_{i}\right]-O_{p}(\delta_{n})\right\}  & =\sqrt{n}E_{n}[\psi_{\theta_{0},h_{0}}]+O_{p}(\delta_{n}).
\end{align*}
Note that by the Lyapunov CLT
\[
\dfrac{\sqrt{n}E_{n}[\psi_{\theta_{0},h_{0}}]}{\bar{E}[\psi_{\theta_{0},h_{0}}^{2}]^{1/2}}\overset{d}{\to}N(0,1)
\]
\end{proof}
\[
\bar{\sigma}_{n}^{-1}\sqrt{n}(\theta-\theta_{0})=\dfrac{\sqrt{n}E_{n}[\psi_{\theta_{0},h_{0}}]}{\bar{E}[\psi_{\theta_{0},h_{0}}^{2}]^{1/2}}\overset{d}{\to}N(0,1)
\]
where $\bar{\sigma}_{n}^{2}=\bar{E}\left[t_{i}I\left(h_{i}>0\right)f_{i}d_{i}v_{i}\right]^{-1}\bar{E}[\psi_{\theta_{0},h_{0}}^{2}]\bar{E}\left[t_{i}I\left(h_{i}>0\right)f_{i}d_{i}v_{i}\right]^{-1}$
by $\bar{E}\left[t_{i}I\left(h_{i}>0\right)f_{i}d_{i}v_{i}\right]\geq c$
in IQR (1).

\medskip

\hspace{-5mm}\textbf{Step 2}. $(1)=-\bar{E}\left[t_{i}I\left(h_{i}>0\right)f_{i}d_{i}v_{i}\right](\theta-\theta_{0})+O(\delta_{n}\left|\theta-\theta_{0}\right|C+\delta_{n}n^{-1/2})$

\hspace{-5mm}Step 2-1. Bounding $\Gamma(\theta,\hat{\eta})$ for
$\left|\theta-\theta_{0}\right|\leq\delta_{n}$

For any (fixed function) $\hat{\eta}\in\overline{\mathcal{F}}$, we
have
\begin{align*}
\bar{E}[\psi_{\theta,\hat{\eta}}] & =\bar{E}[\psi_{\theta,\eta_{0}}]+\bar{E}[\psi_{\theta,\hat{\eta}}]-\bar{E}[\psi_{\theta,\eta_{0}}]\\
 & =\bar{E}[\psi_{\theta,\eta_{0}}]+\left\{ \bar{E}[\psi_{\theta,\hat{\eta}}]-\bar{E}[\psi_{\theta,\eta_{0}}]-\Gamma_{\eta}(\theta,\eta_{0})(\hat{\eta}-\eta_{0})\right\} +\Gamma_{\eta}(\theta,\eta_{0})(\hat{\eta}-\eta_{0})\\
 & =\underbrace{\bar{E}[\psi_{\theta,\eta_{0}}]}_{(a)}+\underbrace{\left\{ \bar{E}[\psi_{\theta,\hat{\eta}}]-\bar{E}[\psi_{\theta,\eta_{0}}]-\Gamma_{\eta}(\theta,\eta_{0})(\hat{\eta}-\eta_{0})\right\} }_{(b)}+\underbrace{\Gamma_{\eta}(\theta_{0},\eta_{0})[\hat{\eta}-\eta_{0}]}_{(c)}\\
 & \;\;+\underbrace{\left\{ \Gamma_{\eta}(\theta,\eta_{0})[\hat{\eta}-\eta_{0}]-\Gamma_{\eta}(\theta_{0},\eta_{0})[\hat{\eta}-\eta_{0}]\right\} }_{(d)}.
\end{align*}
$(a)$: By Taylor expansion, there is some $\tilde{\theta}\in[\theta_{0},\theta]$
such that
\[
\bar{E}[\psi_{\theta,\eta_{0}}]=\Gamma(\theta_{0},\eta_{0})+\Gamma_{\theta}(\tilde{\theta},\eta_{0})(\theta-\theta_{0})=\left\{ \Gamma_{\theta}(\theta_{0},\eta_{0})+\eta_{n}\right\} (\theta-\theta_{0})
\]
where $\left|\eta_{n}\right|\leq\left|\theta-\theta_{0}\right|C\lesssim\delta_{n}C$
by step 2-2.

\hspace{-5mm}$(b)=\Gamma_{\eta\eta}(\theta,\tilde{\eta})[\hat{\eta}-\eta_{0},\hat{\eta}-\eta_{0}]\lesssim\delta_{n}n^{-1/2}$
by step 2-4.

\hspace{-5mm}$(c)\lesssim\delta_{n}n^{-1/2}$ and $(d)\lesssim\left|\theta-\theta_{0}\right|\delta_{n}$
by step 2-3.

\hspace{-5mm}Combining the arguments, we have
\begin{align*}
\bar{E}[\psi_{\theta,\hat{\eta}}] & =\left\{ \Gamma_{\theta}(\theta_{0},\eta_{0})+O(\delta_{n}C)\right\} (\theta-\theta_{0})+O(\delta_{n}n^{-1/2})+O(\delta_{n}n^{-1/2})+O(\left|\theta-\theta_{0}\right|\delta_{n})\\
 & =\Gamma_{\theta}(\theta_{0},h_{0})(\theta-\theta_{0})+O(\delta_{n}\left|\theta-\theta_{0}\right|C+\delta_{n}n^{-1/2})\\
 & =-\bar{E}\left[t_{i}I\left(h_{i}>0\right)f_{i}d_{i}v_{i}\right](\theta-\theta_{0})+O(\delta_{n}\left|\theta-\theta_{0}\right|C+\delta_{n}n^{-1/2}).
\end{align*}

\hspace{-5mm}Step 2-2. Relations for $\Gamma_{\theta}$ in $(a)$
\begin{align*}
\Gamma_{\theta}(\theta,\tilde{\eta}) & =-\bar{E}\left[t_{i}I\left(\tilde{h}_{i}>0\right)f_{\epsilon_{i}|d_{i},z_{i}}(d_{i}(\theta-\theta_{0})+\tilde{g_{i}}-g_{i})d_{i}\tilde{v}_{i}\right]\\
\Gamma_{\theta}(\theta_{0},\eta_{0}) & =-\bar{E}\left[t_{i}I\left(h_{i}>0\right)f_{i}d_{i}v_{i}\right].
\end{align*}
Moreover, $\Gamma_{\theta}$ also satisfies
\begin{align*}
\left|\Gamma_{\theta}(\theta,\eta_{0})-\Gamma_{\theta}(\theta_{0},\eta_{0})\right| & =\left|\bar{E}\left[t_{i}I\left(h_{i}>0\right)f_{\epsilon_{i}|d_{i},z_{i}}(d_{i}(\theta-\theta_{0}))d_{i}v_{i}\right]-\bar{E}\left[t_{i}I\left(h_{i}>0\right)f_{i}d_{i}v_{i}\right]\right|\\
 & =\left|\bar{E}\left[t_{i}I\left(h_{i}>0\right)\left\{ f_{\epsilon_{i}|d_{i},z_{i}}(d_{i}(\theta-\theta_{0}))-f_{i}\right\} d_{i}v_{i}\right]\right|\\
 & \leq\left|\theta-\theta_{0}\right|\bar{f}^{\prime}\bar{E}\left[d_{i}^{2}v_{i}\right]\\
 & \leq\left|\theta-\theta_{0}\right|\bar{f}^{\prime}\left\{ \bar{E}\left[d_{i}^{4}\right]\bar{E}\left[v_{i}^{2}\right]\right\} ^{1/2}\leq C\left|\theta-\theta_{0}\right|
\end{align*}
since $\bar{E}\left[d_{i}^{4}\right]\vee\bar{E}\left[v_{i}^{2}\right]\leq C$
and $\bar{f}^{\prime}\leq C$ from IQR (1).

\hspace{-5mm}Step 2-3. Relations for $\Gamma_{\eta}$ in $(c),\;(d)$
\begin{align*}
\Gamma_{\eta}(\theta,\tilde{\eta})[\hat{\eta}-\eta_{0}] & =-\bar{E}[t_{i}I\left(\tilde{h}_{i}>0\right)f_{\epsilon_{i}|d_{i},z_{i}}\left(d_{i}(\theta-\theta_{0})+\tilde{g_{i}}-g_{i}\right)\tilde{v}_{i}(\hat{g}_{i}-g_{i})]\\
 & \;\;+\bar{E}\left[\left\{ t_{i}I\left(\tilde{h}_{i}>0\right)\left(\tilde{h}_{i}-I(y_{i}-d_{i}\theta-\tilde{g}_{i}\leq0)\right)+(t_{i}-\tilde{\pi}_{i})I\left(\tilde{h}_{i}>0\right)\dfrac{(1-\tau)}{\tilde{\pi}_{i}}\right\} (\hat{v}_{i}-v_{i})\right]
\end{align*}
Note that when $\Gamma_{\eta}$ is evaluated at $(\theta_{0},\eta_{0})$,
we have with probability $1-\Delta_{n}$,
\begin{align*}
\left|\Gamma_{\eta}(\theta_{0},\eta_{0})[\hat{\eta}-\eta_{0}]\right| & =\left|-\bar{E}[t_{i}I\left(h_{i}>0\right)f_{i}v_{i}(\hat{g}_{i}-g_{i})]\right|\leq\delta_{n}n^{-1/2}
\end{align*}
by equation (10).

The expression for $\Gamma_{\eta}$ also leads to the following bound
\begin{align*}
 & \left|\Gamma_{\eta}(\theta,h_{0})[\hat{\eta}-\eta_{0}]-\Gamma_{\eta}(\theta_{0},\eta_{0})[\hat{\eta}-\eta_{0}]\right|\\
 & \leq\left|\bar{E}[t_{i}I\left(h_{i}>0\right)\left\{ f_{i}-f_{\epsilon_{i}|d_{i},z_{i}}\left(d_{i}(\theta-\theta_{0})\right)\right\} v_{i}(\hat{g}_{i}-g_{i})]\right|+\\
 & \;\;+\left|\bar{E}\left[\left\{ t_{i}I\left(h_{i}>0\right)\left(F_{i}(0)-F_{i}(d_{i}(\theta-\theta_{0}))\right)\right\} (\hat{v}_{i}-v_{i})\right]\right|\\
 & \leq\left|\bar{E}[\left\{ f_{i}-f_{\epsilon_{i}|d_{i},z_{i}}\left(d_{i}(\theta-\theta_{0})\right)\right\} v_{i}(\hat{g}_{i}-g_{i})]\right|+\left|\bar{E}\left[\left(F_{i}(0)-F_{i}(d_{i}(\theta-\theta_{0}))\right)(\hat{v}_{i}-v_{i})\right]\right|\\
 & \leq\bar{E}\left[\left|\theta-\theta_{0}\right|\bar{f}^{\prime}\left|d_{i}v_{i}\right|\left|\hat{g}_{i}-g_{i}\right|\right]+\bar{E}\left[\bar{f}\left|d_{i}(\theta-\theta_{0})\right|\left|\hat{v}_{i}-v_{i}\right|\right]\\
 & \leq\bar{f}^{\prime}\left|\theta-\theta_{0}\right|\left\{ \bar{E}\left[d_{i}^{2}v_{i}^{2}\right]\bar{E}\left[\left|\hat{g}_{i}-g_{i}\right|^{2}\right]\right\} ^{1/2}+\bar{f}\left|\theta-\theta_{0}\right|\left\{ \bar{E}\left[d_{i}^{2}\right]\bar{E}\left[(\hat{v}-v_{i})^{2}\right]\right\} ^{1/2}\\
 & \lesssim_{P}\left|\theta-\theta_{0}\right|\delta_{n}
\end{align*}
by equation (9), (11), and $\bar{E}\left[d_{i}^{2}v_{i}^{2}\right]$,
$\bar{E}\left[d_{i}^{2}\right]$ being bounded.

\hspace{-5mm}Step 2-4. Second derivative (d)

Provided $\hat{\eta}\in\overline{\mathcal{F}}$,
\begin{align*}
\Gamma_{\eta\eta}(\theta,\tilde{\eta})[\hat{\eta}-\eta_{0},\hat{\eta}-\eta_{0}] & =-\bar{E}\left[t_{i}I\left(\tilde{h}_{i}>0\right)f_{\epsilon_{i}|d_{i},z_{i}}^{\prime}\left(d_{i}(\theta-\theta_{0})+\tilde{g}_{i}-g_{i}\right)\tilde{v}_{i}(\hat{g}_{i}-g_{i})^{2}\right]\\
 & \;\;-2\bar{E}\left[t_{i}I\left(\tilde{h}_{i}>0\right)f_{\epsilon_{i}|d_{i},z_{i}}\left(d_{i}(\theta-\theta_{0})+\tilde{g}_{i}-g_{i}\right)(\hat{g}_{i}-g_{i})(\hat{v}_{i}-v_{i})\right]
\end{align*}
\begin{align*}
\left|\partial_{\eta}\partial_{\eta^{\prime}}\bar{E}[\psi(\theta_{0},\tilde{\eta})][\hat{\eta}-\eta_{0},\hat{\eta}-\eta_{0}]\right| & \leq\bar{f^{\prime}}\bar{E}\left[\left|\tilde{v}_{i}\right|(\hat{g}_{i}-g_{i})^{2}\right]\\
 & \;\;+2\bar{f}\bar{E}\left[\left|\hat{v}_{i}-v_{i}\right|\cdot\left|\hat{g}_{i}-g_{i}\right|\right]\\
 & \leq\bar{f^{\prime}}\bar{E}\left[\left|v_{i}\right|(\hat{g}_{i}-g_{i})^{2}\right]+\bar{f^{\prime}}\bar{E}\left[\left|\hat{v}_{i}-v_{i}\right|(\hat{g}_{i}-g_{i})^{2}\right]\\
 & \;\;+2\bar{f}\bar{E}\left[\left|\hat{v}_{i}-v_{i}\right|\cdot\left|\hat{g}_{i}-g_{i}\right|\right]
\end{align*}
since $\left|\tilde{v}_{i}\right|\leq\left|v_{i}\right|+\left|\hat{v}_{i}-v_{i}\right|$.
By equation (12), (13), (14), $\left|\partial_{\eta}\partial_{\eta^{\prime}}\bar{E}[\psi(\theta_{0},\tilde{\eta})][\hat{\eta}-\eta_{0},\hat{\eta}-\eta_{0}]\right|\lesssim\delta_{n}n^{-1/2}$.

\subsubsection*{A.3.2 Main result}

Claim: Let Lemma 2 hold. Then,
\[
\bar{\sigma}_{N}^{-1}\sqrt{N}(\tilde{\theta}-\theta)\overset{d}{\to}N(0,1)
\]
where $\bar{\sigma}_{N}^{2}=\left(\bar{E}_{N}[t_{i}I(h_{i}>0)f_{i}d_{i}v_{i}]\right)^{-1}\bar{E}_{N}\left[\psi^{2}(w_{i},\theta_{0},\eta_{0})\right]\left(\bar{E}_{N}[t_{i}I(h_{i}>0)f_{i}d_{i}v_{i}]\right)^{-1}$. 
\begin{proof}
First, note that$\bar{\sigma}_{N}^{-1}\bar{\sigma}_{n}\to1$ and $\bar{E}_{N}\left[\psi^{2}(w_{i},\theta_{0},\eta_{0})\right]^{1/2}\bar{E}_{n}\left[\psi^{2}(w_{i},\theta_{0},\eta_{0})\right]^{-1/2}\to1$
as $N\to\infty$. Then by the Lyapunov CLT we have{\small{}
\begin{align*}
\bar{\sigma}_{N}^{-1}\sqrt{N}(\tilde{\theta}-\theta) & =\bar{\sigma}_{N}^{-1}\sqrt{N}\left(\dfrac{1}{K}\sum_{k=1}^{K}\check{\theta}_{k}-\theta\right)\\
 & =\bar{\sigma}_{N}^{-1}\sqrt{N}\dfrac{1}{K}\sum_{k=1}^{K}\left(\bar{\sigma}_{n}\bar{E}_{n}\left[\psi^{2}(w_{i},\theta_{0},\eta_{0})\right]^{-1/2}\dfrac{1}{n}\sum_{i\in I_{k}}\psi(\theta_{0},\eta_{0})\right)\\
 & =(\bar{\sigma}_{N}^{-1}\bar{\sigma}_{n})(\bar{E}_{N}\left[\psi^{2}(w_{i},\theta_{0},\eta_{0})\right]^{1/2}\bar{E}_{n}\left[\psi^{2}(w_{i},\theta_{0},\eta_{0})\right]^{-1/2})\cdot\bar{E}_{N}\left[\psi^{2}(w_{i},\theta_{0},\eta_{0})\right]^{-1/2}\dfrac{1}{\sqrt{N}}\sum_{k=1}^{K}\sum_{i\in I_{k}}\psi(w_{i},\theta_{0},\eta_{0})\\
 & =(\bar{\sigma}_{N}^{-1}\bar{\sigma}_{n})(\bar{E}_{N}\left[\psi^{2}(w_{i},\theta_{0},\eta_{0})\right]^{1/2}\bar{E}_{n}\left[\psi^{2}(w_{i},\theta_{0},\eta_{0})\right]^{-1/2})\cdot\left(\bar{E}_{N}\left[\psi^{2}(w_{i},\theta_{0},\eta_{0})\right]^{-1/2}\dfrac{1}{\sqrt{N}}\sum_{i=1}^{N}\psi(w_{i},\theta_{0},\eta_{0})\right)\\
 & \overset{d}{\to}N(0,1).
\end{align*}
}{\small\par}
\end{proof}

\subsection*{A.4 Proof of Theorem 1 (DML 2 case)}

We have the following identity
\begin{align*}
\dfrac{1}{K}\sum_{k=1}^{K}E_{n}[\psi_{\tilde{\theta},\hat{\eta}_{k}}] & =\dfrac{1}{K}\sum_{k=1}^{K}E_{n}[\psi_{\theta_{0},\eta_{0}}]+\dfrac{1}{K}\sum_{k=1}^{K}E_{n}[\psi_{\tilde{\theta},\hat{\eta}_{k}}-\psi_{\theta_{0},\eta_{0}}]\\
 & =\dfrac{1}{K}\sum_{k=1}^{K}E_{n}[\psi_{\theta_{0},\eta_{0}}]+\dfrac{1}{K}\sum_{k=1}^{K}\bar{E}[\psi_{\tilde{\theta},\hat{\eta}_{k}}]+\dfrac{1}{K}\sum_{k=1}^{K}n^{-1/2}\mathbb{G}_{n}(\psi_{\tilde{\theta},\hat{\eta}_{k}}-\psi_{\tilde{\theta},\eta_{0}})\\
 & \;\;+\dfrac{1}{K}\sum_{k=1}^{K}n^{-1/2}\mathbb{G}_{n}(\psi_{\tilde{\theta},\eta_{0}}-\psi_{\theta_{0},\eta_{0}})\\
 & =E_{N}[\psi_{\theta_{0},\eta_{0}}]+\dfrac{1}{K}\sum_{k=1}^{K}\bar{E}[\psi_{\tilde{\theta},\hat{\eta}_{k}}]+\dfrac{1}{K}\sum_{k=1}^{K}n^{-1/2}\mathbb{G}_{n}(\psi_{\tilde{\theta},\hat{\eta}_{k}}-\psi_{\tilde{\theta},\eta_{0}})+N^{-1/2}\mathbb{G}_{N}(\psi_{\tilde{\theta},\eta_{0}}-\psi_{\theta_{0},\eta_{0}}).
\end{align*}
By definition, $\left|\dfrac{1}{K}\sum_{k=1}^{K}E_{n}[\psi_{\tilde{\theta},\hat{\eta}_{k}}]\right|\lesssim\delta_{N}N^{-1/2}$.
By the proof of Lemma 2,
\begin{align*}
\dfrac{1}{K}\sum_{k=1}^{K}n^{-1/2}\mathbb{G}_{n}(\psi_{\tilde{\theta},\hat{\eta}_{k}}-\psi_{\tilde{\theta},\eta_{0}}) & \lesssim\dfrac{1}{K}\sum_{k=1}^{K}\delta_{n}n^{-1/2}=\sqrt{K}\delta_{n}N^{-1/2}=\delta_{N}N^{-1/2}
\end{align*}
\[
\dfrac{1}{K}\sum_{k=1}^{K}\bar{E}[\psi_{\tilde{\theta},\hat{\eta}_{k}}]=-\bar{E}_{N}\left[t_{i}I\left(h_{i}>0\right)f_{i}d_{i}v_{i}\right](\hat{\theta}-\theta_{0})+O_{p}(\delta_{N}N^{-1/2})+O_{p}(\delta_{N}\left|\hat{\theta}-\theta_{0}\right|)
\]
\[
\left|N^{-1/2}\mathbb{G}_{N}(\psi_{\tilde{\theta},\eta_{0}}-\psi_{\theta_{0},\eta_{0}})\right|\lesssim\delta_{N}^{1/2}N^{-1/2}.
\]
Therefore,
\[
\bar{E}_{N}\left[t_{i}I\left(h_{i}>0\right)f_{i}d_{i}v_{i}\right](\hat{\theta}-\theta_{0})=E_{N}[\psi_{\theta_{0},\eta_{0}}]+O_{p}(\delta_{N}N^{-1/2})+O_{p}(\delta_{N})\left|\hat{\theta}-\theta_{0}\right|
\]
and by the Lyapunov CLT
\[
\dfrac{\sqrt{N}E_{N}[\psi_{\theta_{0},h_{0}}]}{\bar{E}[\psi_{\theta_{0},h_{0}}^{2}]^{1/2}}\overset{d}{\to}N(0,1)
\]

\[
\bar{\sigma}_{N}^{-1}\sqrt{N}(\theta-\theta_{0})=\dfrac{\sqrt{N}E_{N}[\psi_{\theta_{0},h_{0}}]}{\bar{E}[\psi_{\theta_{0},h_{0}}^{2}]^{1/2}}\overset{d}{\to}N(0,1).
\]

\subsection*{A.5 Proof of Theorem 2}

\subsubsection*{A.5.1 Subsample result}

\textbf{Claim 1.}
\begin{align*}
\dfrac{1}{n}\sum_{i\in I_{k}}t_{i}I(\hat{h}_{k,i}>0)\hat{f}_{k,i}d_{i}\hat{v}_{k,i}-\dfrac{1}{n}\sum_{i\in I_{k}}E\left[t_{i}I(h_{i}>0)f_{i}d_{i}v_{i}\right] & \to0
\end{align*}

\begin{proof}
In the proof, I use $E_{n}$, $\hat{\eta}$, $\left\Vert \cdot\right\Vert _{2,n}$
instead of $E_{n,k}$, $\hat{\eta}_{k}$, $\left\Vert \cdot\right\Vert _{2,n,k}$.
\begin{align*}
 & \left|E_{n}[t_{i}I(\hat{h}_{i}>0)\hat{f}_{i}d_{i}\hat{v}_{i}]-\bar{E}_{n}[t_{i}I\left(h_{i}>0\right)f_{i}d_{i}v_{i}]\right|\\
 & \leq\left|E_{n}[t_{i}I(\hat{h}_{i}>0)\hat{f}_{i}d_{i}\hat{v}_{i}]-E_{n}[t_{i}I\left(h_{i}>0\right)f_{i}d_{i}v_{i}]\right|+\left|E_{n}[t_{i}I\left(h_{i}>0\right)f_{i}d_{i}v_{i}]-\bar{E}_{n}[t_{i}I\left(h_{i}>0\right)f_{i}d_{i}v_{i}]\right|\\
 & \leq\left|E_{n}[t_{i}I(\hat{h}_{i}>0)\hat{f}_{i}d_{i}\hat{v}_{i}]-E_{n}[t_{i}I\left(\hat{h}_{i}>0\right)f_{i}d_{i}v_{i}]\right|+\left|E_{n}[t_{i}I\left(\hat{h}_{i}>0\right)f_{i}d_{i}v_{i}]-E_{n}[t_{i}I\left(h_{i}>0\right)f_{i}d_{i}v_{i}]\right|\\
 & \;\;+\left|E_{n}[t_{i}I\left(h_{i}>0\right)f_{i}d_{i}v_{i}]-\bar{E}_{n}[t_{i}I\left(h_{i}>0\right)f_{i}d_{i}v_{i}]\right|\\
 & \leq\left|E_{n}[d_{i}\left(\hat{f}_{i}\hat{v}_{i}-f_{i}v_{i}\right)]\right|+\left|E_{n}[\left\{ I\left(\hat{h}_{i}>0\right)-I\left(h_{i}>0\right)\right\} f_{i}d_{i}v_{i}]\right|+\left|E_{n}[t_{i}I\left(h_{i}>0\right)f_{i}d_{i}v_{i}]-\bar{E}_{n}[t_{i}I\left(h_{i}>0\right)f_{i}d_{i}v_{i}]\right|\\
 & \leq\left|E_{n}[(\hat{f}_{i}-f_{i})d_{i}\hat{v}_{i}]\right|+\left|E_{n}[f_{i}d_{i}(\hat{v}_{i}-v_{i})]\right|+\left|E_{n}[\left\{ I\left(\hat{h}_{i}>0\right)-I\left(h_{i}>0\right)\right\} f_{i}d_{i}v_{i}]\right|\\
 & \;\;+\left|E_{n}[t_{i}I\left(h_{i}>0\right)f_{i}d_{i}v_{i}]-\bar{E}_{n}[t_{i}I\left(h_{i}>0\right)f_{i}d_{i}v_{i}]\right|\\
 & \leq\left|E_{n}[(\hat{f}_{i}-f_{i})d_{i}(\hat{v}_{i}-v_{i})]\right|+\left|E_{n}[(\hat{f}_{i}-f_{i})d_{i}v_{i}]\right|+\left|E_{n}[f_{i}d_{i}(\hat{v}_{i}-v_{i})]\right|\\
 & \;\;+\left|E_{n}[\left\{ I\left(\hat{h}_{i}>0\right)-I\left(h_{i}>0\right)\right\} f_{i}d_{i}v_{i}]\right|+\left|E_{n}[t_{i}I\left(h_{i}>0\right)f_{i}d_{i}v_{i}]-\bar{E}_{n}[t_{i}I\left(h_{i}>0\right)f_{i}d_{i}v_{i}]\right|\\
 & \lesssim_{P}\left\Vert (\hat{f}_{i}-f_{i})d_{i}\right\Vert _{2,n}\left\Vert \hat{v}_{i}-v_{i}\right\Vert _{2,n}+\left\Vert \hat{f}_{i}-f_{i}\right\Vert _{2,n}\left\Vert d_{i}v_{i}\right\Vert _{2,n}+\left\Vert f_{i}d_{i}\right\Vert _{2,n}\left\Vert \hat{v}_{k,i}-v_{i}\right\Vert _{2,n}\\
 & \;\;+\left\Vert f_{i}d_{i}v_{i}\right\Vert _{2,n}\left\Vert I\left(\hat{h}_{i}>0\right)-I\left(h_{i}>0\right)\right\Vert _{2,n}+\left|E_{n}[t_{i}I\left(h_{i}>0\right)f_{i}d_{i}v_{i}]-\bar{E}_{n}[t_{i}I\left(h_{i}>0\right)f_{i}d_{i}v_{i}]\right|
\end{align*}
Since $f_{i},\hat{f}_{i}\leq C$, $\bar{E}[d_{i}^{4}]\leq C$, $\bar{E}[v_{i}^{4}]\leq C$
by Condition IQR (1), $\left\Vert \hat{v}_{i}-v_{i}\right\Vert _{2,n}\lesssim_{P}\delta_{n}$,

\hspace{-5mm}$\left\Vert I\left(\hat{h}_{k,i}>0\right)-I\left(h_{i}>0\right)\right\Vert _{2,n}\lesssim_{P}\delta_{n}$,
and Law of large numbers,
\[
\left|E_{n}[t_{i}I(\hat{h}_{i}>0)\hat{f}_{i}d_{i}\hat{v}_{i}]-\bar{E}_{n}[t_{i}I\left(h_{i}>0\right)f_{i}d_{i}v_{i}]\right|\lesssim_{P}\delta_{n}.
\]
\end{proof}
\textbf{Claim 2.}

\[
\dfrac{1}{n}\sum_{i\in I_{k}}\psi^{2}(w_{i},\tilde{\theta},\hat{\eta}_{k})-\dfrac{1}{n}\sum_{i\in I_{k}}E\left[\psi^{2}(w_{i},\theta_{0},\eta_{0})\right]\to0
\]

\begin{proof}
In the proof, I use $E_{n}$, $\hat{\eta}$, $\left\Vert \cdot\right\Vert _{2,n}$
instead of $E_{n,k}$, $\hat{\eta}_{k}$, $\left\Vert \cdot\right\Vert _{2,n,k}$.
Since $E_{n}\left[\psi^{2}(w_{i},\theta_{0},\eta_{0})\right]-\bar{E}\left[\psi^{2}(w_{i},\theta_{0},\eta_{0})\right]\to_{P_{n}}0$
by Law of large numbers, it suffices to show $E_{n}\left[\psi^{2}(w_{i},\tilde{\theta},\hat{\eta})\right]-E_{n}\left[\psi^{2}(w_{i},\theta_{0},\eta_{0})\right]\to_{P_{n}}0$.
Since
\begin{align*}
 & E_{n}\left[\psi^{2}(w_{i},\tilde{\theta},\hat{\eta}_{k})\right]-E_{n}\left[\psi^{2}(w_{i},\theta_{0},\eta_{0})\right]=\\
 & \left(\sqrt{E_{n}\left[\psi^{2}(w_{i},\tilde{\theta},\hat{\eta}_{k})\right]}-\sqrt{E_{n}\left[\psi^{2}(w_{i},\theta_{0},\eta_{0})\right]}\right)\cdot\left(\sqrt{E_{n}\left[\psi^{2}(w_{i},\tilde{\theta},\hat{\eta}_{k})\right]}+\sqrt{E_{n}\left[\psi^{2}(w_{i},\theta_{0},\eta_{0})\right]}\right)
\end{align*}
it suffices to show that $\left|\left\Vert \psi(w_{i},\tilde{\theta},\hat{\eta})\right\Vert _{2,n}-\left\Vert \psi(w_{i},\theta_{0},\eta_{0})\right\Vert _{2,n}\right|\to_{P_{n}}0$.
\begin{align*}
 & \left|\left\Vert \psi(w_{i},\tilde{\theta},\hat{\eta})\right\Vert _{2,n}-\left\Vert \psi(w_{i},\theta_{0},\eta_{0})\right\Vert _{2,n}\right|\\
 & \leq\left\Vert \psi(w_{i},\tilde{\theta},\hat{\eta})-\psi(w_{i},\theta_{0},\eta_{0})\right\Vert _{2,n}\\
 & \leq\underbrace{\left\Vert t_{i}I\left(\hat{h}_{i}>0\right)\left(\hat{h}_{i}-I(y_{i}-d_{i}\tilde{\theta}-\hat{g}_{i}\leq0)\right)\hat{v}_{i}-t_{i}I\left(h_{i}>0\right)\left(h_{i}-I(y_{i}-d_{i}\theta_{0}-g_{i}\leq0)\right)v_{i}\right\Vert _{2,n}}_{A}\\
 & \;\;+\underbrace{\left\Vert I\left(\hat{h}_{i}>0\right)(t_{i}-\hat{\pi}_{i})\dfrac{(1-\tau)}{\hat{\pi}_{i}}\hat{v}_{i}-I\left(h_{i}>0\right)(t_{i}-\pi_{i})\dfrac{(1-\tau)}{\pi_{i}}v_{i}\right\Vert _{2,n}}_{B}
\end{align*}
\begin{align*}
A & \leq\left\Vert t_{i}I\left(\hat{h}_{i}>0\right)\left(\hat{h}_{i}-I(y_{i}-d_{i}\tilde{\theta}-\hat{g}_{i}\leq0)\right)(\hat{v}_{i}-v_{i})\right\Vert _{2,n}\\
 & \;\;+\left\Vert \left\{ t_{i}I\left(\hat{h}_{i}>0\right)\left(\hat{h}_{i}-I(y_{i}-d_{i}\tilde{\theta}-\hat{g}_{i}\leq0)\right)-t_{i}I\left(h_{i}>0\right)\left(h_{i}-I(y_{i}-d_{i}\theta-g_{i}\leq0)\right)\right\} v_{i}\right\Vert _{2,n}\\
 & \leq\left\Vert t_{i}I\left(\hat{h}_{i}>0\right)\left(\hat{h}_{i}-I(y_{i}-d_{i}\tilde{\theta}-\hat{g}_{i}\leq0)\right)(\hat{v}_{i}-v_{i})\right\Vert _{2,n}\\
 & \;\;+\left\Vert \left\{ t_{i}I\left(\hat{h}_{i}>0\right)\left(\hat{h}_{i}-I(y_{i}-d_{i}\tilde{\theta}-\hat{g}_{i}\leq0)\right)-t_{i}I\left(\hat{h}_{i}>0\right)\left(h_{i}-I(y_{i}-d_{i}\theta-g_{i}\leq0)\right)\right\} v_{i}\right\Vert _{2,n}\\
 & \;\;+\left\Vert \left\{ t_{i}I\left(\hat{h}_{i}>0\right)\left(h_{i}-I(y_{i}-d_{i}\theta-g_{i}\leq0)\right)-t_{i}I\left(h_{i}>0\right)\left(h_{i}-I(y_{i}-d_{i}\theta-g_{i}\leq0)\right)\right\} v_{i}\right\Vert _{2,n}\\
 & \leq\left\Vert t_{i}I\left(\hat{h}_{i}>0\right)\left(\hat{h}_{i}-I(y_{i}-d_{i}\tilde{\theta}-\hat{g}_{i}\leq0)\right)(\hat{v}_{i}-v_{i})\right\Vert _{2,n}\\
 & \;\;+\left\Vert \left\{ t_{i}I\left(\hat{h}_{i}>0\right)\left(\hat{h}_{i}-h_{i}-I(y_{i}-d_{i}\tilde{\theta}-\hat{g}_{i}\leq0)+I(y_{i}-d_{i}\theta-g_{i}\leq0)\right)\right\} v_{i}\right\Vert _{2,n}\\
 & \;\;+\left\Vert \left\{ I\left(\hat{h}_{i}>0\right)-I\left(h_{i}>0\right)\right\} t_{i}\left(h_{i}-I(y_{i}-d_{i}\theta-g_{i}\leq0)\right)v_{i}\right\Vert _{2,n}\\
 & \leq\left\Vert \left(\hat{h}_{i}-I(y_{i}-d_{i}\tilde{\theta}-\hat{g}_{i}\leq0)\right)(\hat{v}_{i}-v_{i})\right\Vert _{2,n}\\
 & \;\;+\left\Vert v_{i}\left(\hat{h}_{i}-h_{i}\right)\right\Vert _{2,n}\\
 & \;\;+\left\Vert v_{i}\left(I(y_{i}-d_{i}\tilde{\theta}-\hat{g}_{i}\leq0)-I(y_{i}-d_{i}\theta-g_{i}\leq0)\right)\right\Vert _{2,n}\\
 & \;\;+\left\Vert \left(h_{i}-I(y_{i}-d_{i}\theta-g_{i}\leq0)\right)v_{i}\left\{ I\left(\hat{h}_{i}>0\right)-I\left(h_{i}>0\right)\right\} \right\Vert _{2,n}
\end{align*}
\begin{align*}
\left\Vert \left(\hat{h}_{k,i}-I(y_{i}-d_{i}\tilde{\theta}-\hat{g}_{k,i}\leq0)\right)(\hat{v}_{k,i}-v_{i})\right\Vert _{2,n} & \lesssim_{P}\left\Vert \hat{v}_{k,i}-v_{i}\right\Vert _{2,n}\lesssim_{P}\delta_{n}\\
\left\Vert v_{i}\left(\hat{h}_{k,i}-h_{i}\right)\right\Vert _{2,n} & \lesssim_{P}\left\Vert \hat{\pi}_{k,i}-\pi_{i}\right\Vert _{2,n}\lesssim_{P}\delta_{n}
\end{align*}
\begin{align*}
\left\Vert v_{i}\left(I(y_{i}-d_{i}\tilde{\theta}-\hat{g}_{k,i}\leq0)-I(y_{i}-d_{i}\theta-g_{i}\leq0)\right)\right\Vert _{2,n} & \leq\left\Vert v_{i}I\left(\left|\epsilon_{i}\right|\leq\left|d_{i}(\theta-\tilde{\theta})+g_{i}-\tilde{g}_{i}\right|\right)\right\Vert _{2,n}\\
 & \leq\left\{ \left\Vert v_{i}^{2}\right\Vert _{2,n}\left\Vert I\left(\left|\epsilon_{i}\right|\leq\left|d_{i}(\theta-\tilde{\theta})+g_{i}-\tilde{g}_{i}\right|\right)\right\Vert _{2,n}\right\} ^{1/2}\lesssim_{P}\delta_{n}
\end{align*}
\[
\left\Vert \left(h_{i}-I(y_{i}-d_{i}\theta-g_{i}\leq0)\right)v_{i}\left\{ I\left(\hat{h}_{k,i}>0\right)-I\left(h_{i}>0\right)\right\} \right\Vert _{2,n}\lesssim_{P}\left\Vert I\left(\hat{h}_{k,i}>0\right)-I\left(h_{i}>0\right)\right\Vert _{2,n,k}\lesssim_{P}\delta_{n}
\]
Therefore, $A\lesssim_{P}\delta_{n}$.
\begin{align*}
B & =\left\Vert (1-\tau)t_{i}\cdot I\left(\hat{h}_{i}>0\right)\dfrac{\hat{v}_{i}}{\hat{\pi}_{i}}-(1-\tau)\cdot I\left(\hat{h}_{i}>0\right)\hat{v}_{i}-(1-\tau)t_{i}\cdot I\left(h_{i}>0\right)\dfrac{v_{i}}{\pi_{i}}+(1-\tau)\cdot I\left(h_{i}>0\right)v_{i}\right\Vert _{2,n}\\
 & \lesssim\left\Vert I\left(\hat{h}_{i}>0\right)\dfrac{\hat{v}_{i}}{\hat{\pi}_{i}}-I\left(h_{i}>0\right)\dfrac{v_{i}}{\pi_{i}}\right\Vert _{2,n}+\left\Vert I\left(\hat{h}_{i}>0\right)\hat{v}_{i}-I\left(h_{i}>0\right)v_{i}\right\Vert _{2,n}\\
 & \leq\left\Vert I\left(\hat{h}_{i}>0\right)\dfrac{\hat{v}_{i}}{\hat{\pi}_{i}}-I\left(\hat{h}_{i}>0\right)\dfrac{v_{i}}{\hat{\pi}_{i}}\right\Vert _{2,n}+\left\Vert I\left(\hat{h}_{i}>0\right)\dfrac{v_{i}}{\hat{\pi}_{i}}-I\left(\hat{h}_{i}>0\right)\dfrac{v_{i}}{\pi_{i}}\right\Vert _{2,n}+\left\Vert I\left(\hat{h}_{i}>0\right)\dfrac{v_{i}}{\pi_{i}}-I\left(h_{i}>0\right)\dfrac{v_{i}}{\pi_{i}}\right\Vert _{2,n}\\
 & +\left\Vert I\left(\hat{h}_{i}>0\right)\hat{v}_{i}-I\left(\hat{h}_{i}>0\right)v_{i}\right\Vert _{2,n}+\left\Vert I\left(\hat{h}_{i}>0\right)v_{i}-I\left(h_{i}>0\right)v_{i}\right\Vert _{2,n}\\
 & =\left\Vert I\left(\hat{h}_{i}>0\right)\dfrac{\hat{v}_{i}-v_{i}}{\hat{\pi}_{i}}\right\Vert _{2,n}+\left\Vert I\left(\hat{h}_{i}>0\right)v_{i}\dfrac{1}{\hat{\pi}_{i}\pi_{i}}(\hat{\pi}_{i}-\pi_{i})\right\Vert _{2,n}+\left\Vert \dfrac{v_{i}}{\pi_{i}}\left\{ I\left(\hat{h}_{i}>0\right)-I\left(h_{i}>0\right)\right\} \right\Vert _{2,n}\\
 & +\left\Vert I\left(\hat{h}_{i}>0\right)\left(\hat{v}_{i}-v_{i}\right)\right\Vert _{2,n}+\left\Vert v_{i}\left\{ I\left(\hat{h}_{i}>0\right)-I\left(h_{i}>0\right)\right\} \right\Vert _{2,n}
\end{align*}
Since $\hat{\pi}_{i}$ and $\pi_{i}$ are bounded away from zero,
$\bar{E}[v_{i}^{2}]\leq C$, $\left\Vert \hat{v}_{i}-v_{i}\right\Vert _{2,n}\lesssim_{P}\delta_{n}$,
$\left\Vert I\left(\hat{h}_{k,i}>0\right)-I\left(h_{i}>0\right)\right\Vert _{2,n}\lesssim_{P}\delta_{n}$,
$\left\Vert \hat{\pi}_{i}-\pi_{i}\right\Vert _{2,n}\lesssim_{P}\delta_{n}$,
$B\lesssim_{P}\delta_{n}$.
\end{proof}

\subsubsection*{A.5.2 Main result}

Under Claim 1 and Claim 2, the variance estimator $\hat{\sigma}_{N}^{2}$
is consistent where
\begin{align*}
\hat{\sigma}_{N}^{2} & =\left(\dfrac{1}{K}\sum_{k=1}^{K}E_{n,k}[t_{i}I(\hat{h}_{k,i}>0)\hat{f}_{k,i}d_{i}\hat{v}_{k,i}]\right)^{-1}\dfrac{1}{K}\sum_{k=1}^{K}E_{n,k}\left[\psi^{2}(w_{i},\tilde{\theta},\hat{\eta}_{k})\right]\left(\dfrac{1}{K}\sum_{k=1}^{K}E_{n,k}[t_{i}I(\hat{h}_{k,i}>0)\hat{f}_{k,i}d_{i}\hat{v}_{k,i}]\right)^{-1}.
\end{align*}

\begin{proof}
First, by Claim 1, for all $k\in[K]$
\begin{align*}
\left|\dfrac{1}{n}\sum_{i\in I_{k}}t_{i}I(\hat{h}_{k,i}>0)\hat{f}_{k,i}d_{i}\hat{v}_{k,i}-\dfrac{1}{n}\sum_{i\in I_{k}}E\left[t_{i}I(h_{i}>0)f_{i}d_{i}v_{i}\right]\right| & \to0.
\end{align*}
Then,
\begin{align*}
 & \dfrac{1}{N}\sum_{i=1}^{N}E[t_{i}I(h_{i}>0)f_{i}d_{i}v_{i}]-\dfrac{1}{K}\sum_{k=1}^{K}E_{n,k}[t_{i}I(\hat{h}_{k,i}>0)\hat{f}_{k,i}d_{i}\hat{v}_{k,i}]\\
 & =\dfrac{1}{K}\sum_{k=1}^{K}\left\{ \dfrac{1}{n}\sum_{i\in I_{k}}E\left[t_{i}I(h_{i}>0)f_{i}d_{i}v_{i}\right]-\dfrac{1}{n}\sum_{i\in I_{k}}t_{i}I(\hat{h}_{k,i}>0)\hat{f}_{k,i}d_{i}\hat{v}_{k,i}\right\} \\
 & \to0.
\end{align*}
Next, by Claim 2, for all $k\in[K]$
\begin{align*}
\left|\dfrac{1}{n}\sum_{i\in I_{k}}\psi^{2}(w_{i},\tilde{\theta},\hat{\eta}_{k})-\dfrac{1}{n}\sum_{i\in I_{k}}E\left[\psi^{2}(w_{i},\theta_{0},\eta_{0})\right]\right| & \to0.
\end{align*}
Then,
\begin{align*}
\dfrac{1}{N}\sum_{i=1}^{N}E\left[\psi^{2}(w_{i},\theta_{0},\eta_{0})\right]-\dfrac{1}{K}\sum_{k=1}^{K}E_{n,k}[\psi^{2}(w_{i},\tilde{\theta},\hat{\eta}_{0,k})] & =\dfrac{1}{K}\sum_{k=1}^{K}\left\{ \dfrac{1}{n}\sum_{i\in I_{k}}E\left[\psi^{2}(w_{i},\theta_{0},\eta_{0})\right]-\dfrac{1}{n}\sum_{i\in I_{k}}\psi^{2}(w_{i},\tilde{\theta},\hat{\eta}_{k})\right\} \\
 & \to0.
\end{align*}
\end{proof}
\newpage

\bibliographystyle{ecta}
\bibliography{reference}

\begin{thebibliography}{34}
\newcommand{\enquote}[1]{``#1''}
\expandafter\ifx\csname natexlab\endcsname\relax\def\natexlab#1{#1}\fi

\bibitem[\protect\citeauthoryear{Andrews}{Andrews}{1994}]{andrews1994empirical}
\textsc{Andrews, D.~W.} (1994): \enquote{Empirical process methods in
  econometrics,} \emph{Handbook of econometrics}, 4, 2247--2294.

\bibitem[\protect\citeauthoryear{Belloni, Chen, Chernozhukov, and
  Hansen}{Belloni et~al.}{2012}]{belloni2012sparse}
\textsc{Belloni, A., D.~Chen, V.~Chernozhukov, and C.~Hansen} (2012):
  \enquote{Sparse models and methods for optimal instruments with an
  application to eminent domain,} \emph{Econometrica}, 80, 2369--2429.

\bibitem[\protect\citeauthoryear{Belloni and Chernozhukov}{Belloni and
  Chernozhukov}{2011}]{belloni2011l1}
\textsc{Belloni, A. and V.~Chernozhukov} (2011): \enquote{l1-penalized quantile
  regression in high-dimensional sparse models,} \emph{The Annals of
  Statistics}, 39, 82--130.

\bibitem[\protect\citeauthoryear{Belloni, Chernozhukov, Fern{\'a}ndez-Val, and
  Hansen}{Belloni et~al.}{2017}]{belloni2017program}
\textsc{Belloni, A., V.~Chernozhukov, I.~Fern{\'a}ndez-Val, and C.~Hansen}
  (2017): \enquote{Program evaluation and causal inference with
  high-dimensional data,} \emph{Econometrica}, 85, 233--298.

\bibitem[\protect\citeauthoryear{Belloni, Chernozhukov, and Hansen}{Belloni
  et~al.}{2014}]{belloni2014inference}
\textsc{Belloni, A., V.~Chernozhukov, and C.~Hansen} (2014): \enquote{Inference
  on treatment effects after selection among high-dimensional controls,}
  \emph{The Review of Economic Studies}, 81, 608--650.

\bibitem[\protect\citeauthoryear{Belloni, Chernozhukov, and Kato}{Belloni
  et~al.}{2019}]{belloni2019valid}
\textsc{Belloni, A., V.~Chernozhukov, and K.~Kato} (2019): \enquote{Valid
  post-selection inference in high-dimensional approximately sparse quantile
  regression models,} \emph{Journal of the American Statistical Association},
  114, 749--758.

\bibitem[\protect\citeauthoryear{Buchinsky and Hahn}{Buchinsky and
  Hahn}{1998}]{buchinsky1998alternative}
\textsc{Buchinsky, M. and J.~Hahn} (1998): \enquote{An alternative estimator
  for the censored quantile regression model,} \emph{Econometrica}, 653--671.

\bibitem[\protect\citeauthoryear{Chen, Huang, and Tien}{Chen
  et~al.}{2021}]{chen2021debiased}
\textsc{Chen, J.-e., C.-H. Huang, and J.-J. Tien} (2021):
  \enquote{Debiased/Double Machine Learning for Instrumental Variable Quantile
  Regressions,} \emph{Econometrics}, 9, 15.

\bibitem[\protect\citeauthoryear{Chen}{Chen}{2018}]{chen2018sequential}
\textsc{Chen, S.} (2018): \enquote{Sequential estimation of censored quantile
  regression models,} \emph{Journal of Econometrics}, 207, 30--52.

\bibitem[\protect\citeauthoryear{Chernozhukov, Chetverikov, Demirer, Duflo,
  Hansen, and Newey}{Chernozhukov
  et~al.}{2017{\natexlab{a}}}]{chernozhukov2017double}
\textsc{Chernozhukov, V., D.~Chetverikov, M.~Demirer, E.~Duflo, C.~Hansen, and
  W.~Newey} (2017{\natexlab{a}}): \enquote{Double/debiased/neyman machine
  learning of treatment effects,} \emph{American Economic Review}, 107,
  261--265.

\bibitem[\protect\citeauthoryear{Chernozhukov, Chetverikov, Demirer, Duflo,
  Hansen, Newey, and Robins}{Chernozhukov
  et~al.}{2018}]{chernozhukov2018double}
\textsc{Chernozhukov, V., D.~Chetverikov, M.~Demirer, E.~Duflo, C.~Hansen,
  W.~Newey, and J.~Robins} (2018): \enquote{Double/debiased machine learning
  for treatment and structural parameters,} \emph{The Econometrics Journal},
  21, C1--C68.

\bibitem[\protect\citeauthoryear{Chernozhukov, Chetverikov, and
  Kato}{Chernozhukov et~al.}{2014}]{chernozhukov2014gaussian}
\textsc{Chernozhukov, V., D.~Chetverikov, and K.~Kato} (2014):
  \enquote{Gaussian approximation of suprema of empirical processes,} \emph{The
  Annals of Statistics}, 42, 1564--1597.

\bibitem[\protect\citeauthoryear{Chernozhukov, Escanciano, Ichimura, Newey, and
  Robins}{Chernozhukov et~al.}{2020}]{chernozhukov2020locally}
\textsc{Chernozhukov, V., J.~C. Escanciano, H.~Ichimura, W.~K. Newey, and J.~M.
  Robins} (2020): \enquote{Locally robust semiparametric estimation,}
  \emph{arXiv preprint arXiv:1608.00033}.

\bibitem[\protect\citeauthoryear{Chernozhukov, Fern{\'a}ndez-Val, and
  Kowalski}{Chernozhukov et~al.}{2015{\natexlab{a}}}]{chernozhukov2015quantile}
\textsc{Chernozhukov, V., I.~Fern{\'a}ndez-Val, and A.~E. Kowalski}
  (2015{\natexlab{a}}): \enquote{Quantile regression with censoring and
  endogeneity,} \emph{Journal of Econometrics}, 186, 201--221.

\bibitem[\protect\citeauthoryear{Chernozhukov and Hansen}{Chernozhukov and
  Hansen}{2004}]{chernozhukov2004effects}
\textsc{Chernozhukov, V. and C.~Hansen} (2004): \enquote{The effects of 401 (k)
  participation on the wealth distribution: an instrumental quantile regression
  analysis,} \emph{Review of Economics and statistics}, 86, 735--751.

\bibitem[\protect\citeauthoryear{Chernozhukov, Hansen, and
  Spindler}{Chernozhukov et~al.}{2015{\natexlab{b}}}]{chernozhukov2015valid}
\textsc{Chernozhukov, V., C.~Hansen, and M.~Spindler} (2015{\natexlab{b}}):
  \enquote{Valid post-selection and post-regularization inference: An
  elementary, general approach,} \emph{Annu. Rev. Econ.}, 7, 649--688.

\bibitem[\protect\citeauthoryear{Chernozhukov, Hansen, and
  Wuthrich}{Chernozhukov
  et~al.}{2017{\natexlab{b}}}]{chernozhukov2018instrumental}
\textsc{Chernozhukov, V., C.~Hansen, and K.~Wuthrich} (2017{\natexlab{b}}):
  \enquote{Instrumental variable quantile regression,} in \emph{Handbook of
  Quantile Regression}, Chapman and Hall/CRC, 119--143.

\bibitem[\protect\citeauthoryear{Chernozhukov and Hong}{Chernozhukov and
  Hong}{2002}]{chernozhukov2002three}
\textsc{Chernozhukov, V. and H.~Hong} (2002): \enquote{Three-step censored
  quantile regression and extramarital affairs,} \emph{Journal of the American
  statistical Association}, 97, 872--882.

\bibitem[\protect\citeauthoryear{Fack and Landais}{Fack and
  Landais}{2010}]{fack2010tax}
\textsc{Fack, G. and C.~Landais} (2010): \enquote{Are tax incentives for
  charitable giving efficient? Evidence from France,} \emph{American Economic
  Journal: Economic Policy}, 2, 117--41.

\bibitem[\protect\citeauthoryear{Fei, Zheng, Hong, and Li}{Fei
  et~al.}{2021}]{fei2021inference}
\textsc{Fei, Z., Q.~Zheng, H.~G. Hong, and Y.~Li} (2021): \enquote{Inference
  for High-Dimensional Censored Quantile Regression,} \emph{Journal of the
  American Statistical Association}, 1--15.

\bibitem[\protect\citeauthoryear{Fitzenberger}{Fitzenberger}{1997}]{fitzenberger199715}
\textsc{Fitzenberger, B.} (1997): \enquote{15 a guide to censored quantile
  regressions,} \emph{Handbook of statistics}, 15, 405--437.

\bibitem[\protect\citeauthoryear{Fitzenberger and Winker}{Fitzenberger and
  Winker}{2007}]{fitzenberger2007improving}
\textsc{Fitzenberger, B. and P.~Winker} (2007): \enquote{Improving the
  computation of censored quantile regressions,} \emph{Computational Statistics
  \& Data Analysis}, 52, 88--108.

\bibitem[\protect\citeauthoryear{Graff~Zivin and Neidell}{Graff~Zivin and
  Neidell}{2012}]{graff2012impact}
\textsc{Graff~Zivin, J. and M.~Neidell} (2012): \enquote{The impact of
  pollution on worker productivity,} \emph{American Economic Review}, 102,
  3652--73.

\bibitem[\protect\citeauthoryear{He, Pan, Tan, and Zhou}{He
  et~al.}{2022}]{he2022scalable}
\textsc{He, X., X.~Pan, K.~M. Tan, and W.-X. Zhou} (2022): \enquote{Scalable
  estimation and inference for censored quantile regression process,} \emph{The
  Annals of Statistics}, 50, 2899--2924.

\bibitem[\protect\citeauthoryear{Khan and Powell}{Khan and
  Powell}{2001}]{khan2001two}
\textsc{Khan, S. and J.~L. Powell} (2001): \enquote{Two-step estimation of
  semiparametric censored regression models,} \emph{Journal of Econometrics},
  103, 73--110.

\bibitem[\protect\citeauthoryear{Koenker}{Koenker}{2008}]{koenker2008censored}
\textsc{Koenker, R.} (2008): \enquote{Censored quantile regression redux,}
  \emph{Journal of Statistical Software}, 27, 1--25.

\bibitem[\protect\citeauthoryear{Kowalski}{Kowalski}{2016}]{kowalski2016censored}
\textsc{Kowalski, A.} (2016): \enquote{Censored quantile instrumental variable
  estimates of the price elasticity of expenditure on medical care,}
  \emph{Journal of Business \& Economic Statistics}, 34, 107--117.

\bibitem[\protect\citeauthoryear{Peng and Huang}{Peng and
  Huang}{2008}]{peng2008survival}
\textsc{Peng, L. and Y.~Huang} (2008): \enquote{Survival analysis with quantile
  regression models,} \emph{Journal of the American Statistical Association},
  103, 637--649.

\bibitem[\protect\citeauthoryear{Portnoy}{Portnoy}{2003}]{portnoy2003censored}
\textsc{Portnoy, S.} (2003): \enquote{Censored regression quantiles,}
  \emph{Journal of the American Statistical Association}, 98, 1001--1012.

\bibitem[\protect\citeauthoryear{Poterba, Venti, and Wise}{Poterba
  et~al.}{1994}]{poterba1994401}
\textsc{Poterba, J.~M., S.~F. Venti, and D.~A. Wise} (1994): \enquote{401 (k)
  plans and tax-deferred saving,} in \emph{Studies in the Economics of Aging},
  University of Chicago Press, 105--142.

\bibitem[\protect\citeauthoryear{Poterba, Venti, and Wise}{Poterba
  et~al.}{1995}]{poterba1995401}
---\hspace{-.1pt}---\hspace{-.1pt}--- (1995): \enquote{Do 401 (k) contributions
  crowd out other personal saving?} \emph{Journal of Public Economics}, 58,
  1--32.

\bibitem[\protect\citeauthoryear{Powell}{Powell}{1986}]{powell1986censored}
\textsc{Powell, J.~L.} (1986): \enquote{Censored regression quantiles,}
  \emph{Journal of econometrics}, 32, 143--155.

\bibitem[\protect\citeauthoryear{Tang, Wang, He, and Zhu}{Tang
  et~al.}{2012}]{tang2012informative}
\textsc{Tang, Y., H.~J. Wang, X.~He, and Z.~Zhu} (2012): \enquote{An
  informative subset-based estimator for censored quantile regression,}
  \emph{Test}, 21, 635--655.

\bibitem[\protect\citeauthoryear{Zheng, Peng, and He}{Zheng
  et~al.}{2018}]{zheng2018high}
\textsc{Zheng, Q., L.~Peng, and X.~He} (2018): \enquote{High dimensional
  censored quantile regression,} \emph{Annals of statistics}, 46, 308.

\end{thebibliography}

\end{document}